\documentclass{article}

\usepackage{arxiv}

\usepackage{hyperref}
\usepackage{url}
\usepackage{amsmath} 
\usepackage[utf8]{inputenc} 
\usepackage[T1]{fontenc}    
\usepackage{hyperref}       
\usepackage{url}            
\usepackage{booktabs}       
\usepackage{amsfonts}       
\usepackage{nicefrac}       
\usepackage{microtype}      
\usepackage{xcolor}         
\usepackage{graphics}
\usepackage{graphicx}
\usepackage{amsthm}

\title{A Theory of Tournament Representations}


\author{Arun Rajkumar \\
	Indian Institute of Technology Madras\\
	\texttt{arunr@cse.iitm.ac.in} \\
	\And
    Vishnu Veerathu \\
	Cohesity, India \\
	\texttt{ofvishnuveerathu@gmail.com} \\
	 \AND
	 Abdul Bakey Mir \\
	 Indian Institute of Technology Madras\\
	 \texttt{mirbakey@gmail.com } \\
}

\date{}


\hypersetup{
pdftitle={A Theory of Tournament Representations},
pdfauthor={Arun Rajkumar, Vishnu Veerathu, Abdul Bakey Mir},
}

\newcommand{\bM}{\mathbf{M}}
\newcommand{\bT}{\mathbf{T}}
\newcommand{\bA}{\mathbf{A}}
\newcommand{\bR}{\mathbf{R}}
\newcommand{\bG}{\mathbf{G}}
\newcommand{\bH}{\mathbf{H}}
\newcommand{\cT}{\mathcal{T}}
\newcommand{\cR}{\mathbb{R}}
\newcommand{\bu}{\mathbf{u}}
\newcommand{\bv}{\mathbf{v}}

\newcommand{\bw}{\mathbf{w}}
\newcommand{\bh}{\mathbf{h}}

\newcommand{\cF}{\mathcal{F}}
\newcommand{\rot}{\texttt{rot}}

\newtheorem{theorem}{Theorem}
\newtheorem{conjecture}{Conjecture}
\newtheorem{corollary}{Corollary}
\newtheorem{proposition}{Proposition}
\newtheorem{lemma}[theorem]{Lemma}
\newtheorem{definition}{Definition}

\begin{document}
\maketitle
\begin{abstract}
Real world tournaments are almost always intransitive. Recent works have noted that parametric models which assume  $d$ dimensional node representations can effectively model intransitive tournaments. However, nothing is known about the structure of the class of tournaments that arise out of any fixed $d$ dimensional representations. In this work, we develop a novel theory for understanding parametric tournament representations. Our first contribution is to structurally characterize the class of tournaments that arise out of $d$ dimensional representations. We do this by showing that these tournament classes have forbidden configurations which must necessarily be union of flip classes, a novel way to partition the set of all tournaments. We further characterize rank $2$ tournaments completely by showing that the associated forbidden flip class contains just $2$ tournaments. Specifically, we show that the rank $2$ tournaments are equivalent to locally-transitive tournaments. This insight allows us to show that the minimum feedback arc set problem on this tournament class can be solved using the standard Quicksort procedure. For a general rank $d$ tournament class, we show that the flip class associated with a coned-doubly regular tournament of size $\mathcal{O}(\sqrt{d})$ must be a forbidden configuration. To answer a dual question, using a celebrated result of  \cite{forster}, we show a lower bound of $\mathcal{O}(\sqrt{n})$ on the minimum dimension needed to represent all tournaments on $n$ nodes. For any given tournament, we  show a novel upper bound on the smallest representation dimension that depends on the least size of the number of unique nodes in any feedback arc set of the flip class associated with a tournament. We show how our results also shed light on upper bound of sign-rank of  matrices. 
\end{abstract}

\section{Introduction}

In this work, we lay the the foundations for a  theory of tournament representations. A tournament is a complete directed graph and arises naturally in several applications including ranking from pairwise preferences, sports modeling, social choice, etc. We say that a tournament $\bT$ on $n$ nodes can be \emph{represented} in $d$ dimensions if there exists a skew symmetric matrix $\bM \in \cR^{n \times n}$ of rank $d$ such that a directed edge from $i$ to $j$ is present in $\bT$ if and only if $M_{ij} > 0$. Real world tournaments are almost always intransitive (\cite{intrans1, intrans2}) and it is not known what type of tournaments can be represented in how many dimensions. This is important to understand because of the following reason: As a modeler of preference relations using tournaments, it is often more natural to have \emph{structural} domain knowledge such as \emph{`The tournaments under consideration do not have long cycles'} as opposed to \emph{algebraic} domain knowledge such as \emph{`The rank of the skew symmetric matrix associated with the tournaments of interest is at most $k$'}. However, algorithms that learn rankings from pairwise comparison data typically need as input the algebraic quantity - the rank of the skew symmetric matrices associated with tournaments or equivalently the dimension where they are represented (\cite{rajkumar2016can}). To bridge the gap between the structural and the algebraic world, we ask and answer two fundamental questions regarding the representations of tournaments. 

\emph{1) What structurally characterizes the class of tournaments that can be represented in $d$ dimensions?}

\emph{2) Given a tournament $\bT$ on $n$ nodes, what is the minimum dimension $d$ needed to represent it?}.

We answer the first question by investigating the intricate structure of the rank $d$ tournament class via the notion of forbidden configurations. Specifically, we show that the set of forbidden configuration for the rank $d$ tournament class must necessarily be a union of \emph{flip classes}, a novel way to partition the set of all tournaments into equivalence classes.  We explicitly characterize the forbidden configurations for the rank $2$ tournament class and exhibit a forbidden configuration for the general rank $d$ tournament class. Specifically, we show that the rank $2$ tournaments are equivalent to \emph{locally transitive tournaments}, a previously studied class of tournaments (\cite{loctrans}). Our results throw light on the connections between transitive and locally transitive tournaments and also lets us develop a classic Quicksort based algorithm to solve that the minimum feedback arc set problem on rank $2$ tournaments with $\mathcal{O}(n^2)$ time complexity.  
Our results for the general  rank $d$ tournament class have connections to the classic long standing Hadamard conjecture and we discuss this as well. Figure \ref{fig:flipclass} gives a glimpse of some of the main results.

We answer the second question by proving lower and upper bounds on the smallest dimension needed to represent a tournament on $n$ nodes. We exhibit a lower bound of $\mathcal{O}(\sqrt{n})$   using a variation of the celebrated dimension complexity result of \cite{forster} for sign matrices. To show upper bounds, we introduce a novel parameter associated to a tournament called the \emph{Flip Feedback Node set of a Tournament}. This quantity depends on the least number of unique nodes in any feedback arc set of an associated tournament class for the tournament of interest and upper bounds linearly the representation dimension of any tournament. We show how our results can be used to provide upper bounds on the classic notion of sign-rank of a matrix. Previously known upper bounds for sign rank depended on the VC dimension of the associated binary function class (\cite{signrank}). Our upper bounds on the other hand have a graph theoretic flavour. 

\textbf{Organization of the Paper:} 
We discuss briefly in Section \ref{sec:rel} the foundational works this paper builds upon. We introduce necessary preliminaries in Section \ref{sec:prelim}. The answer to the first question about the structural characterization of $d$ dimensional tournament classes span Sections \ref{sec:flip}, \ref{sec:rank2} and \ref{sec:rankd}. We devote Section \ref{sec:dualq} to answer the second question about the number of dimensions needed to represent a tournament. Section \ref{sec:signrank} explores connections of our results to upper bounds on the sign rank of a sign matrix. Finally, we conclude in Section \ref{sec:conc}.
\section{Related Work}
\label{sec:rel}
The work in this paper builds on several pieces of work across different domains. We summarize below the most important related works under different categories:

\begin{figure*}
 \includegraphics[scale=0.24]{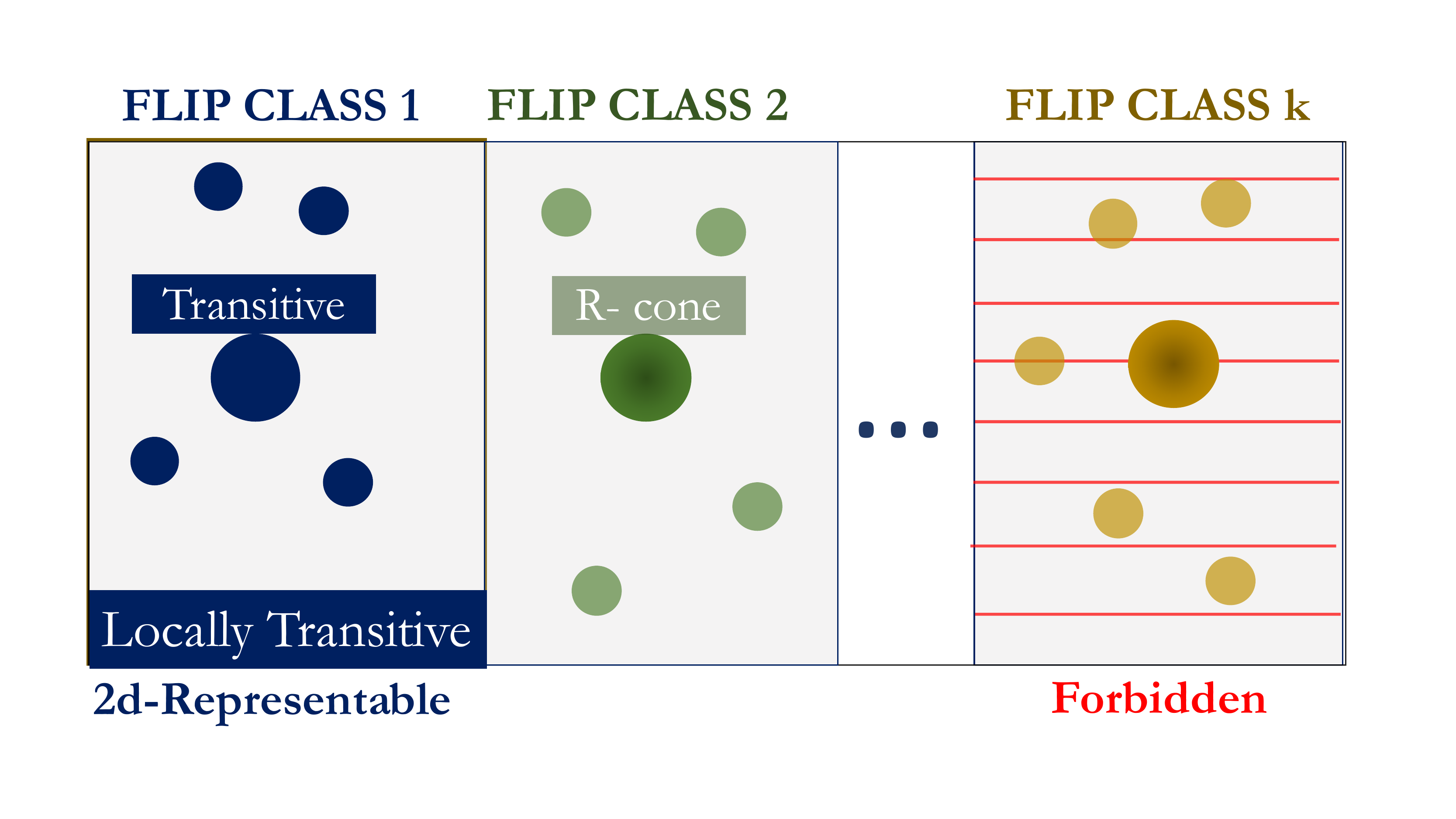}
 \hspace{0.5cm}
 \includegraphics[scale=0.24]{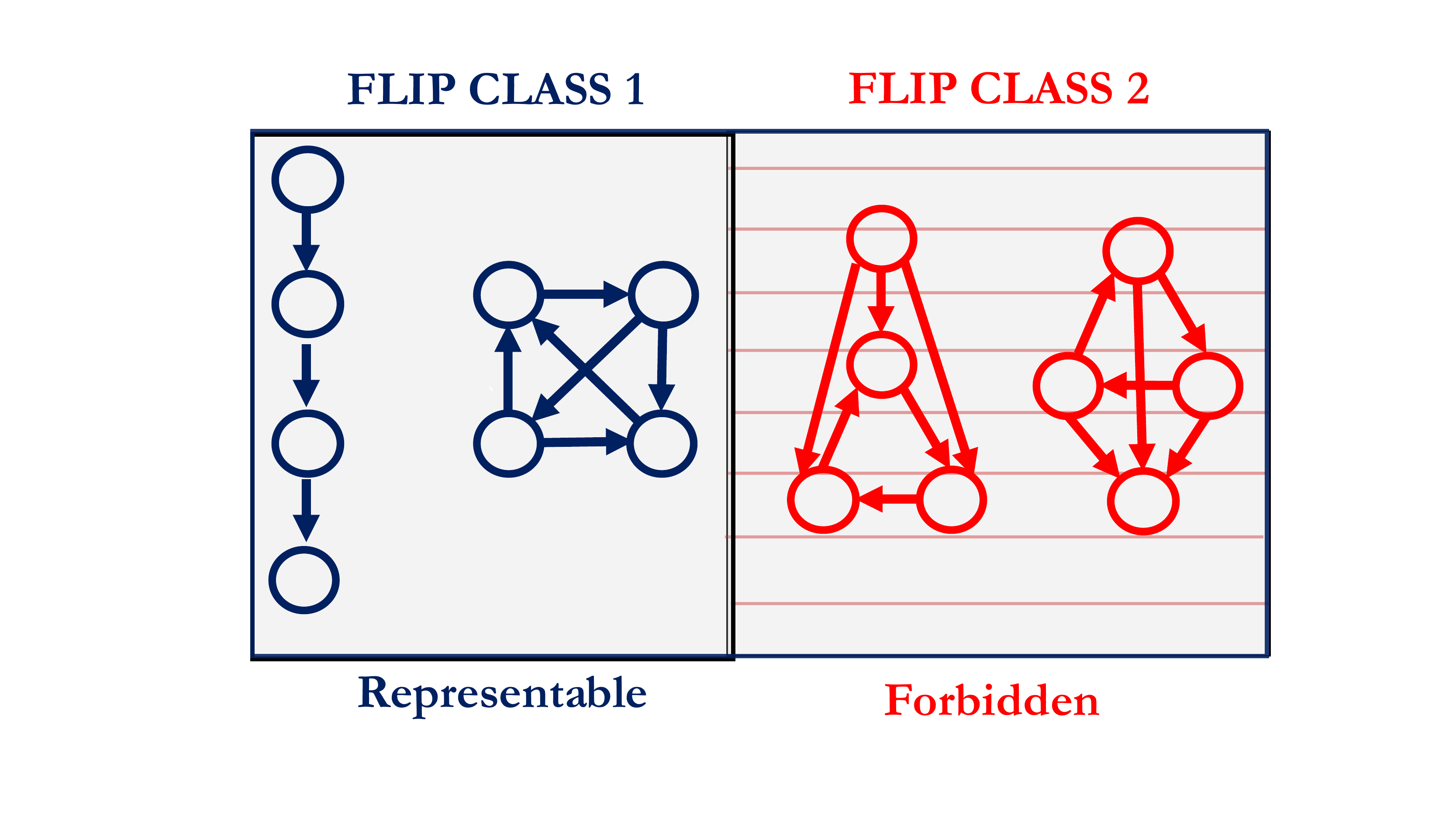}
    \caption{\textbf{(Left)} Partitions of the set of all tournaments on $n$ nodes using flip classes. Every shaded region is a flip class partition and every circle indicates a tournament. The flip class that contains the transitive tournament (Flip class $1$) is precisely the set of all \emph{locally transitive tournaments}. This is also the set of all tournaments that can be represented in $2$ dimensions (Section \ref{sec:rank2}). Every flip class contains a canonical representative termed the R-cone (Section \ref{sec:flip}), indicated using the larger circle inside each flip class. The tournaments  that cannot be represented using $d$ dimensions appear as union of Forbidden Flip classes (Flip class $k$ in Figure) (Section \ref{sec:flip}). \textbf{(Right)} Explicit flip class partition of the $4$ possible non-isomorphic tournaments on $4$ nodes. Tournaments in flip class $1$ can be represented using $2$ dimensions whereas tournaments in flip class $2$ cannot (see Section \ref{sec:rank2}).}
    \label{fig:flipclass}.
\end{figure*}

\textbf{Intransitive Pairwise Preference Models:} One of the main reasons to study representations of tournaments is to model pairwise preferences.  
Parametric pairwise preference models  that can model intransitivity have gained recent interest. \cite{rajkumar2016can} develop a low rank pairwise ranking model that can model intransitivity. However, their study and results were restricted to just the transitive tournaments in these classes. A generalization of the classical Bradley-Terry-Luce model (\cite{bradley}, \cite{luce}) was studied in \cite{twodimbtl}. However, no structural characterization is known. Same holds for the more recent models studied in  \cite{bladechest}, \cite{majorityvote} and \cite{salient}. 

\textbf{Flip Classes:}
The notion of flip classes, a novel way to  partition the set of all tournaments on $n$ nodes, was first introduced in \cite{flipclasses}. The goal however was completely different and was on studying equilibrium on certain generalized rock-paper-scissors games on tournaments. Interestingly, and perhaps surprisingly, the notion of flip classes turn out to be fundamental to our study of understanding forbidden configurations of $d$ dimensional tournament classes.

\textbf{Dimension Complexity and Sign Rank:}
Dimension complexity and sign rank of sign pattern matrices were studied in \cite{forster2002linear}. These results have found significant applications in learning theory and lower bounds in computational complexity (\cite{forster}). More recently, \cite{signrank} study the sign rank for function classes with fixed Vapnik-Chervonenkis (VC) dimension and show upper bounds. Our upper bounds however depend on certain graph theoretic properties. 

\section{Preliminaries}
\label{sec:prelim} \textbf{Tournaments:} A tournament $\bT$ is a complete directed graph. The number of nodes $n$ in $\bT$ will be usually clear from the context or will be explicitly specified. For nodes $i$ and $j$ in $\bT$, we say $i \succ_{\bT} j$ if there is a directed edge from $i$ to $j$. Given a node $i$, we define the out and in neighbours of $i$ as $\bT^+_i = \{j: i \succ_{\bT} j\}$ and $\bT^-_i = \{j: j \succ_{\bT} i\}$ respectively. Given a set of nodes $S$, we denote by $\bT(S)$ the induced sub-tournament of $\bT$ on the nodes in $S$.

\textbf{Feedback Arc Set and Pairwise Disagreement Error:} Given a permutation $\sigma$ on $n$ nodes, the feedback arc set of $\sigma$ w.r.t the tournament $\bT$ is defined as $E^{\sigma}(\bT) = \{(i,j): \sigma(i) > \sigma(j), i >_{\bT} j\}.$ The pairwise diagreement error of $\sigma$ w.r.t $\bT$ is given by $\frac{|E^{\sigma}(\bT)|}{{n \choose 2}}$. It is known that finding the $\sigma$ that minimizes the pairwise disagreement error w.r.t a general tournament $\bT$ is a NP-hard problem (\cite{nphard}). 

\textbf{Skew Symmetric Tournament Classes:} In this paper, whenever we refer to a skew symmetric matrix $\bM \in \cR^{n \times n}$, we always assume $M_{ij} \neq 0 \quad \forall i \neq j$ and $M_{ii} = 0 ~~\forall i$. Given such an $\bM$, we denote by $\bT\{\bM\}$ the tournament on $n$ nodes \emph{induced} by $\bM$ where $i \succ_{\bT} j \iff M_{ij} > 0$. We refer to class of tournaments induced by rank $d$ skew symmetric matrices as the \emph{rank $d$ tournament class}. 

\textbf{Forbidden Configurations:} A tournament class $\cT$ is a collection of tournaments. $\cT$ is said to \emph{forbid} a tournament $\bT$ if no tournament in $\cT$ has a sub-tournament that is isomorphic to $\bT$. We call $\bT$ a \emph{forbidden configuration} for $\cT$ if $\cT$ forbids $\bT$ but does not forbid any sub-tournament of $\bT$. For example, the class of all acyclic/transitive tournaments has the $3$-cycle as a forbidden configuration i.e, the tournament $\bT$ on three nodes $i,j,k$ where $i \succ_{\bT} j \succ_{\bT} k \succ_{\bT} i$.

\textbf{Representations and Tournaments:}
\label{subsec:Rep&T}
The matrix $\bA^{\rot} \in \cR^{d \times d}$ is defined for every even $d$ and is a block diagonal matrix which consists of $d/2$  blocks of $[0 -1; 1~ 0]$. This is the canonical representation of a non-degenerate skew symmetric matrix i.e., any non-singular skew symmetric matrix can be brought to this form by a suitable basis transformation.  Given a set of vectors in $\bH = \{\bh_1, \ldots, \bh_n\} \in \cR^d$ for some even $d$, we refer to the set $\bH$ as a \emph{tournament inducing representation} if $\bh_i^{T} \bA^{\rot} \bh_j \neq 0$ for all $i \neq j$. Furthermore, we refer to the tournament induced by $\bH$ as $\bT[\bH]$ where a directed edge exits from node $i$ to $j$ if and only if $\bh_i^{T} \bA^{\rot} \bh_j > 0$. It is easy to verify that $\bh^{T}\bA^{\rot}\bh= 0$ for any $\bh \in \cR^d$. Any skew symmetric matrix $\bM \in \cR^{n \times n}$ of rank $d$ can be written as $\bM = \bH^{T}\bA^{\rot}\bH$ for some $\bH \in \cR^{d\times n}$.  It follows that if $\bM = \bH^{T}\bA^{\rot}\bH$, then $\bT\{\bM\} = \bT[\bH].$

\textbf{Positive Spans:} A set of vectors $\bH = \{\bh_1, \ldots, \bh_n\} \in \cR^d$ is said to \emph{positively span} $\cR^d$ if for any $\bw \in \cR^d$, there exists non-negative constants $c_1,\ldots,c_n \ge 0$ such that $\sum_i c_i \bh_i = \bw$. If $\bH$ positively spans $\cR^d$, then there does not exist a $\bw \in \cR^d$ such that $\bw^{T}\bh_i > 0 ~~\forall i$. This is a easy consequence of Farka's Lemma.


\textbf{Remark on Notation:} We reiterate that we use $\bT(\cdot), \bT\{\cdot\}$ and $\bT[\cdot]$ to mean different objects - the tournament induced by a subset of nodes, the tournament induced by a skew symmetric matrix  and the tournament induced by a representation of a set of vectors respectively. These will be usually clear from the context.

\section{Flip Classes, Forbidden Configurations and Positive Spans}
\label{sec:flip}
The main purpose of this section is understand the space of forbidden configurations of rank $d$ tournaments. The main result of this section shows that the forbidden configurations for rank $d$ tournament classes occur as union of certain carefully defined equivalence classes of non-isomorphic tournaments. Towards this, we define the notion of \emph{flip classes} which was introduced first in \cite{flipclasses} although in a different context:
\begin{definition}
Given a tournament $\bT$ on $n$ nodes and a set $S \subseteq [n]$, define $\phi_S(\bT)$ to be the tournament obtained from $\bT$ by reversing the orientation of all edges $(i,j)$ such that $i\in S, j \in \bar{S}$.
\end{definition}
In other words, $\phi_S(\bT)$ is obtained from $\bT$ by reversing the edges across the cut $(S,\bar{S}) = \{ (i,j): i \in S, j \in \bar{S}\}$.
\begin{definition}
A class of tournaments on $n$ nodes is called \emph{cut-equivalent} if for every pair of tournaments $\bT,\bT'$ in the class, there exists a $S \subseteq [n]$ such that $\bT'$ is isomorphic to $\phi_S(\bT)$
\end{definition}

It is easy to show that the set of cut-equivalent tournaments form an equivalence relation over the set of all tournaments on $n$ nodes \cite{flipclasses}. The corresponding equivalence classes are called a flip classes. We denote by $\cF(\bT)$ the \emph{flip class of $\bT$} i.e., the equivalence class of all cut-equivalent tournaments to $\bT$. In the following theorem we show the fundamental relation between flip classes and forbidden configurations.

\begin{theorem} 
\label{thm:forb}
Let $\bT$, a tournament on $k$ nodes, be a forbidden configuration for some rank $d$ tournament class. Then every tournament in the flip class of $\bT$ is also a forbidden configuration for the rank $d$ tournament class. 
\end{theorem}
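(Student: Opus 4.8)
The plan is to show that every flip $\phi_S$ is realized, at the level of representations, by merely negating a subset of the representing vectors, an operation that manifestly preserves the representation dimension. Recall that a tournament $\bT$ on $k$ nodes fails to be forbidden by the rank $d$ class precisely when it admits a tournament-inducing representation $\bH = \{\bh_1,\dots,\bh_k\}\subseteq \cR^d$ with $\bT = \bT[\bH]$ (possibly after adjoining auxiliary nodes to bring the induced skew matrix to rank exactly $d$); call such a $\bT$ \emph{$d$-representable}. The heart of the argument is the observation that if $\bH_S$ is obtained from $\bH$ by replacing $\bh_i$ with $-\bh_i$ for every $i\in S$, then $\bT[\bH_S] = \phi_S(\bT)$. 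This is immediate from $(-\bh_i)^{T}\bA^{\rot}(-\bh_j) = \bh_i^{T}\bA^{\rot}\bh_j$ when $i,j\in S$, the same identity when $i,j\in\bar{S}$, and the single sign change $(-\bh_i)^{T}\bA^{\rot}\bh_j = -\,\bh_i^{T}\bA^{\rot}\bh_j$ when exactly one endpoint lies in $S$ --- which are exactly the edges across the cut $(S,\bar{S})$. Equivalently, at the matrix level $\phi_S(\bT\{\bM\}) = \bT\{D\bM D\}$, where $D$ is diagonal with $D_{ii}=-1$ on $S$ and $+1$ elsewhere; since $D$ is invertible, $\mathrm{rank}(D\bM D)=\mathrm{rank}(\bM)$.

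First I would record the consequence that $d$-representability is flip-invariant: since $\bH_S\subseteq\cR^d$ whenever $\bH\subseteq\cR^d$, and since $\phi_S$ is an involution, $\bT$ is $d$-representable if and only if $\phi_S(\bT)$ is. Dually, the rank $d$ class forbids $\bT$ if and only if it forbids $\phi_S(\bT)$, which already settles the first of the two defining properties of a forbidden configuration for $\phi_S(\bT)$.

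Next I would handle the second defining property --- that no proper sub-tournament is forbidden --- using the purely combinatorial fact that flips commute with restriction: for any $R\subseteq[k]$, the sub-tournament of $\phi_S(\bT)$ induced on $R$ equals $\phi_{S\cap R}\big(\bT(R)\big)$, because an edge inside $R$ crosses $(S,\bar{S})$ iff it crosses $(S\cap R,\, R\setminus S)$. Given that $\bT$ is a forbidden configuration, every proper $\bT(R)$ is $d$-representable; by flip-invariance so is $\phi_{S\cap R}\big(\bT(R)\big) = \big(\phi_S(\bT)\big)(R)$. Combined with the previous paragraph, this shows $\phi_S(\bT)$ is itself forbidden while all its proper sub-tournaments are not, i.e. $\phi_S(\bT)$ is again a forbidden configuration. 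Since every member of $\cF(\bT)$ is isomorphic to some $\phi_S(\bT)$ and the forbidden-configuration property is isomorphism-invariant, the theorem follows.

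I expect the only genuinely delicate point to be the interface between flips and sub-tournaments: a flip of the $k$-node configuration must be ``lifted'' to a flip of whatever larger tournament witnesses representability, and one must verify that such a lift reverses exactly the intended edges on the relevant $k$ nodes while leaving the rank untouched. The negation/conjugation viewpoint is what makes this transparent, since negating the columns of $\bH$ indexed by any subset is harmless to both the ambient dimension and to the induced sub-tournament structure; the remaining bookkeeping (that $\phi_S$ is an involution and commutes with restriction) is routine.
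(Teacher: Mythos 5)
Your proposal is correct and rests on the same central idea as the paper's proof: negating the representation vectors indexed by $S$ realizes the flip $\phi_S$ without changing the ambient dimension (equivalently, conjugating the skew-symmetric matrix by a $\pm 1$ diagonal matrix preserves rank), so representability is flip-invariant. You actually go a step further than the paper, whose proof only establishes that every $\bT' \in \cF(\bT)$ is non-representable; your observation that flips commute with restriction, $\big(\phi_S(\bT)\big)(R) = \phi_{S\cap R}\big(\bT(R)\big)$, also verifies the minimality clause (no proper sub-tournament is forbidden) required by the paper's definition of a forbidden configuration, a point the paper's own proof leaves implicit.
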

\begin{proof}
Assume there is a $\bT' \in \cF(\bT)$ which is not forbidden for rank $d$ tournament class. Thus, $\exists \bH' = \{\bh'_1, \ldots, \bh'_n\} \in \cR^d$ such that $\bT' = \bT[\bH']$. By definition, there must also exist a $S \subseteq [n]$ such that $\bT' = \phi_S(\bT)$. Consider the representation $\bH$ obtained from $\bH'$ where $\bh_i = -\bh'_i$ for all $i \in S$ and $\bh_i = \bh'_i$ otherwise. It is easy to verify that $\bT = \bT[\bH]$ which is a contradiction to the assumption that $\bT$ is a forbidden configuration for rank $d$ tournaments. 
\end{proof}

\begin{corollary} (\textbf{Structure of Forbidden configurations})
\label{cor:fc}
Forbidden configurations of rank $d$ tournament classes are unions of flip classes.
\end{corollary}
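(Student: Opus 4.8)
The plan is to treat this corollary as a purely set-theoretic repackaging of Theorem~\ref{thm:forb}, so that most of the work is already done. Fix the number of nodes $k$ and recall that cut-equivalence is an equivalence relation on the set of all tournaments on $k$ nodes whose equivalence classes are exactly the flip classes; in particular these classes partition the set of all $k$-node tournaments. Let $\mathcal{B}$ denote the collection of all forbidden configurations for a fixed rank $d$ tournament class. I want to exhibit $\mathcal{B}$ as a union of flip classes.

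The single key step is to observe that Theorem~\ref{thm:forb} says precisely that $\mathcal{B}$ is \emph{saturated} under cut-equivalence: if $\bT \in \mathcal{B}$, then the entire flip class $\cF(\bT)$ lies in $\mathcal{B}$. Granting this, I would invoke the elementary fact that any subset of a set partitioned by an equivalence relation that is closed under that relation equals the union of the classes it meets. Concretely, $\mathcal{B} = \bigcup_{\bT \in \mathcal{B}} \cF(\bT)$: the inclusion $\supseteq$ is exactly Theorem~\ref{thm:forb}, and the inclusion $\subseteq$ holds because each $\bT \in \mathcal{B}$ lies in its own flip class $\cF(\bT)$. Since every $\cF(\bT)$ is a flip class, this writes $\mathcal{B}$ as a union of flip classes, which is the claim. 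Collecting over all sizes $k$ is immediate, since flipping a cut never changes the number of nodes and so flip classes never mix tournaments of different sizes.

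There is essentially no obstacle left once Theorem~\ref{thm:forb} is in hand; the only point worth a second glance is definitional. A forbidden configuration is required not merely to be forbidden but also to be minimal, in that no proper sub-tournament of it is forbidden, so I must make sure Theorem~\ref{thm:forb} really delivers membership in $\mathcal{B}$ for every flip-class member rather than mere non-representability. This minimality is itself flip-invariant: if $\bT' = \phi_S(\bT)$ and $U$ is a proper subset of $[k]$, then restriction commutes with flipping, $\bT'(U) = \phi_{S \cap U}(\bT(U))$, so any forbidden proper sub-tournament of $\bT'$ would be cut-equivalent to a proper sub-tournament of $\bT$ and hence, by the non-representability invariance established in the proof of Theorem~\ref{thm:forb}, would force one inside $\bT$, contradicting the minimality of $\bT$. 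Because this is already subsumed by the statement of Theorem~\ref{thm:forb}, for the corollary itself I would simply cite the theorem and run the saturation argument above.
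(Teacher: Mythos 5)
Your proof is correct and takes essentially the same route as the paper, whose entire proof of Corollary~\ref{cor:fc} is the single line that it follows directly from Theorem~\ref{thm:forb}; your saturation argument merely makes that one-line deduction explicit. Your additional check that minimality is itself flip-invariant (via $\bT'(U) = \phi_{S \cap U}(\bT(U))$) fills in a detail the paper leaves implicit in Theorem~\ref{thm:forb}, but it does not change the approach.
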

\begin{proof}
This follows directly from Theorem \ref{thm:forb}.
\end{proof}

Thus, to characterize the forbidden configuration of rank $d$ tournament classes, we need to understand the flip classes. We begin with the following simple but useful definition.

\begin{definition} A tournament $\bT$ is called a \emph{$\bR$-cone} if there exists a node $i^*$ such that 
\begin{itemize}
    \item $i^* \succ_{\bT} j$ for all $j \neq i^*$ \big (or $j \succ_{\bT} i^*$ for all $j \neq i^* \big)$
    \item $\bT(\bT^+_{i^*})$ \big( or $\bT(\bT^-_{i^*})$\big) is isomorphic to the tournament $\bR$.
\end{itemize}  
\end{definition}

$\bR$-cones are essentially the tournament $\bR$ along with an additional node that either beats or loses to nodes in $\bR$. $\bR$-cones are useful as they be viewed in some sense as canonical tournaments in flip classes. This is justified because of the following observation.
\begin{proposition}
 Every flip class contains an $\bR$-cone for some tournament $\bR.$
\end{proposition}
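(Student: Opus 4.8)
The plan is to show that from any tournament one can reach an $\bR$-cone by a single flip, namely the flip that turns a fixed node into a source. First I would fix an arbitrary node $i^*$ of $\bT$ and set $S = \bT^-_{i^*}$, the set of in-neighbours of $i^*$; observe that $i^* \notin S$, so $i^* \in \bar{S}$. I then consider $\phi_S(\bT)$, which lies in $\cF(\bT)$ because it is obtained from $\bT$ by a single cut reversal and is therefore cut-equivalent to $\bT$.

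The key step is to track the edges incident to $i^*$ under $\phi_S$. Recalling from the discussion after the definition that $\phi_S$ reverses exactly the edges crossing the cut $(S,\bar{S})$, every edge $j \succ_{\bT} i^*$ with $j \in S = \bT^-_{i^*}$ crosses the cut (its endpoints are $j \in S$ and $i^* \in \bar{S}$) and is reversed, so $i^*$ now beats every former in-neighbour. Meanwhile every edge $i^* \succ_{\bT} k$ with $k \in \bT^+_{i^*} \subseteq \bar{S}$ lies entirely inside $\bar{S}$ and is left unchanged. Hence in $\phi_S(\bT)$ the node $i^*$ beats all other nodes.

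Finally I would let $\bR$ be the sub-tournament of $\phi_S(\bT)$ induced on $[n]\setminus\{i^*\}$. Since $i^*$ is a source in $\phi_S(\bT)$, its out-neighbourhood in $\phi_S(\bT)$ is exactly $[n]\setminus\{i^*\}$, so $\phi_S(\bT)$ satisfies both bullets of the $\bR$-cone definition with distinguished node $i^*$ and this choice of $\bR$. This exhibits an $\bR$-cone inside $\cF(\bT)$, which proves the claim.

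The construction is short, so there is no serious obstacle; the only point requiring care is the bookkeeping of the flip operation, where one must place $i^*$ on the $\bar{S}$ side of the cut so that precisely the ``wrong'' edges (those from $\bT^-_{i^*}$ into $i^*$) are reversed while none of the ``correct'' edges (those from $i^*$ to $\bT^+_{i^*}$) are touched. Symmetrically, taking $S = \bT^+_{i^*}$ would instead make $i^*$ a sink and yield an $\bR$-cone of the second type in the definition.
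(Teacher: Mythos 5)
Your proof is correct and follows essentially the same route as the paper: a single cut reversal involving the in-neighbourhood of an arbitrary node turns that node into a cone vertex, and the resulting tournament is in $\cF(\bT)$ by definition. The only cosmetic difference is that you take $S=\bT^-_{i^*}$ (making $i^*$ a source), while the paper flips $S=\{i\}\cup\bT^-_i$ (making $i$ a sink); both cases are covered by the two alternatives in the $\bR$-cone definition, as you note yourself.
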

\begin{proof}
Consider any tournament $\bT$. Let $ \bT' = \phi_{\{i \cup \bT^-_{i}\}}(\bT)$ for an arbitrary node $i$. By definition $\bT' \in \cF(\bT)$. Also, $\bT'$ is a $\bR$-cone, coned by $i$.
\end{proof}
The above observation says that to identify the forbidden configurations for a given tournament class, it suffices to identify all forbidden $\bR$-cones. Then by Corollary \ref{cor:fc}, the associated flip classes will be the set of all forbidden configurations. However, it does not throw light on what property the tournament $\bR$ must satisfy. The following lemma establishes this.

\begin{lemma}
\label{lem:pos}
Let $\bR$ be a tournament with the property that  $\bH$ positively spans $\cR^d$ whenever $\bR = \bR[\bH]$ for some representation $\bH = \{\bh_1,\ldots, \bh_n\} \in \cR^d$. Then rank $d$ tournament class forbids $\bR$-cones.
\end{lemma}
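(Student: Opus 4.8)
The plan is to prove the contrapositive: suppose the rank $d$ tournament class does \emph{not} forbid some $\bR$-cone, i.e., there exists a representation realizing an $\bR$-cone in $d$ dimensions. I will then manufacture from this a representation of $\bR$ itself that fails to positively span $\cR^d$, contradicting the hypothesis on $\bR$.

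**First I would** unpack the structure of an $\bR$-cone. Let $\bT$ be an $\bR$-cone coned by a node $i^*$, say with $i^* \succ_{\bT} j$ for all $j \neq i^*$, and with $\bT(\bT^+_{i^*})$ isomorphic to $\bR$. Suppose $\bT = \bT[\bH]$ for some representation $\bH = \{\bh_{i^*}, \bh_1, \ldots, \bh_n\} \in \cR^d$. The key observations to extract are: (i) the sub-collection $\bH' = \{\bh_1, \ldots, \bh_n\}$ of vectors corresponding to the nodes of $\bR$ satisfies $\bR = \bR[\bH']$, since the induced sub-tournament on those nodes is exactly $\bR$ and the inducing rule $\bh_p^T \bA^{\rot} \bh_q > 0$ depends only on the pair $(p,q)$; and (ii) because $i^*$ beats every other node, we have $\bh_{i^*}^T \bA^{\rot} \bh_j > 0$ for all $j \neq i^*$.

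**The crux** is to turn observation (ii) into a certificate that $\bH'$ fails to positively span $\cR^d$. Set $\bw = \bA^{\rot} \bh_{i^*}$ (equivalently, use $-\bA^{\rot}\bh_{i^*}$ depending on sign conventions). Then for each node $j$ of $\bR$ we have $\bw^T \bh_j = (\bA^{\rot}\bh_{i^*})^T \bh_j = \bh_{i^*}^T (\bA^{\rot})^T \bh_j = -\bh_{i^*}^T \bA^{\rot} \bh_j < 0$, using skew-symmetry of $\bA^{\rot}$. So $-\bw$ is a vector with $(-\bw)^T \bh_j > 0$ for every $\bh_j \in \bH'$. By the Farkas-based fact recalled in the Positive Spans paragraph, the existence of such a vector means $\bH'$ does \emph{not} positively span $\cR^d$. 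Note $\bw \neq 0$: since $\bA^{\rot}$ is nonsingular and $\bh_{i^*} \neq 0$ (it has strictly positive products with other vectors), $\bw = \bA^{\rot}\bh_{i^*} \neq 0$.

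**This yields the contradiction:** we have produced a representation $\bH'$ of $\bR$ that does not positively span $\cR^d$, contradicting the hypothesis that every representation of $\bR$ positively spans $\cR^d$. Hence no such representation of the $\bR$-cone exists, so the rank $d$ tournament class forbids $\bR$-cones. The main obstacle I anticipate is purely bookkeeping rather than conceptual: handling the two sign cases in the definition of an $\bR$-cone (node $i^*$ beating everyone versus losing to everyone) uniformly, and being careful that the ``forbids'' notion is about isomorphic sub-tournaments — so I should confirm the argument is insensitive to relabeling, which it is, since positive spanning is a property of the set of vectors independent of node labels. A secondary subtlety is that the definition of forbidden configuration requires no sub-tournament to be forbidden, but the lemma only asserts that the class forbids $\bR$-cones as tournaments, so I need only rule out realizability of the $\bR$-cone itself, which the above does directly.
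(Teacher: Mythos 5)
Your proof is correct and follows essentially the same route as the paper's: both arguments extract the sub-representation realizing $\bR$, note the coning vector $\bh_{i^*}$ satisfies $\bh_{i^*}^T \bA^{\rot} \bh_j > 0$ for all $j$, and map it through $\bA^{\rot}$ (your $-\bw$ is exactly the paper's $\bv = (\bA^{\rot})^T\bh_{i^*}$, by skew-symmetry) to contradict the Farkas consequence of positive spanning. Your write-up is in fact slightly more careful than the paper's, e.g.\ in checking $\bw \neq 0$, invariance under relabeling, and the two orientation cases of the cone.
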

\begin{proof}
Consider any representation $\bH$ that realizes $\bR$. By assumption of the theorem, $\bH$ positively spans $\cR^d$. Thus, by Farka's lemma, there cannot exist a $\bv \in \cR^d$ such that $\bv^{T}\bh_i > 0~~ \forall i$.  Note that if $\bR$-cone is not a forbidden configuration for rank $d$ tournaments, then there must exist a $\bh \in \cR^d$ such that $\bh^{T}A^{rot}\bh_i > 0 ~~ \forall i$. As $A^{rot}$ is invertible, one can set  $\bv = (A^{rot})^{T}\bh$ with the property that  $\bv^{T}\bh_i > 0 ~~\forall i$. But this contradicts the conclusion drawn earlier from Farka's lemma.
\end{proof}

The above lemma is extremely helpful in the sense that it reduces the study of forbidden configurations to the study of finding tournaments such that any representation that realizes it must necessarily positively span the entire Euclidean space. Note that there may be several representations which positively span the entire space. This does not mean their the associated coned tournaments are forbidden configurations. Instead, we start with an $\bR$ cone and conclude it is forbidden if \emph{every} representation that realizes $\bR$ necessarily positively spans the entire space. It is a non-trivial problem to identify such tournaments $\bR$ for an arbitrary dimension $d$. 

In the following sections, we explicitly identify the only forbidden flip class for rank $2$ tournaments, one forbidden flip class for rank $4$ and then (a potentially weaker) forbidden flip class for the general rank $d$ case. 

\section{Rank $2$ Tournaments $\iff$ Locally Transitive Tournaments}
\label{sec:rank2}
The goal of this section is to characterize the forbidden configurations of rank $2$ tournaments. Thanks to Lemma \ref{lem:pos}, this reduces to the problem of identifying a tournament whose representation necessarily spans the entire space. The following lemma exhibits this tournament.

\begin{lemma}
\label{thm:lem3nodes}
 Let $\bH = \{\bh_1, \bh_2, \bh_3\} \in \cR^2$ be two dimensional representation of $3$ nodes which induce a $3$ cycle tournament. Then the set $\bH$ positively spans $\cR^2$. Furthermore, the $3$ cycle is the only such tournament on $3$ nodes.
\end{lemma}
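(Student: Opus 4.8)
The plan is to pass to the geometric picture in $\cR^2$, where $\bA^{\rot}$ is the $90^{\circ}$ rotation and the form $\bh_i^{T}\bA^{\rot}\bh_j$ has a clean interpretation in terms of the angle between $\bh_i$ and $\bh_j$. Writing each (necessarily nonzero) vector in polar form $\bh_i = r_i(\cos\theta_i,\sin\theta_i)$, a direct computation gives $\bh_i^{T}\bA^{\rot}\bh_j = r_i r_j \sin(\theta_i - \theta_j)$. Since the tournament condition forces $\bh_i^{T}\bA^{\rot}\bh_j \neq 0$ for all $i\neq j$, no two vectors are parallel or antiparallel, so the three directions $\theta_1,\theta_2,\theta_3$ are distinct on the circle, and the sign rule becomes $i \succ_{\bT} j \iff \sin(\theta_i - \theta_j) > 0 \iff (\theta_i - \theta_j)\bmod 2\pi \in (0,\pi)$.

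First I would exploit the $3$-cycle hypothesis. After relabelling, assume $1 \succ_{\bT} 2 \succ_{\bT} 3 \succ_{\bT} 1$, so each of the three angular differences $a = (\theta_1-\theta_2)\bmod 2\pi$, $b=(\theta_2-\theta_3)\bmod 2\pi$, $c=(\theta_3-\theta_1)\bmod 2\pi$ lies in $(0,\pi)$. Their sum is $\equiv 0 \pmod{2\pi}$ yet lies in $(0,3\pi)$, forcing $a+b+c = 2\pi$ exactly. Reading $a,b,c$ as the consecutive counterclockwise gaps between the three directions, this says the circle is partitioned into three arcs, each of length strictly less than $\pi$.

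Next I would convert this spacing statement into positive spanning. The key elementary fact is that three nonzero vectors in $\cR^2$ whose consecutive angular gaps are all $<\pi$ positively span $\cR^2$: the three wedges cut out by consecutive pairs cover the plane, and inside any such wedge (angle $<\pi$, so its two bounding vectors are linearly independent) every vector is a nonnegative combination of those two bounding vectors. Hence an arbitrary $\bw\in\cR^2$ is a nonnegative combination of two of the $\bh_i$ (with the third coefficient $0$), which is exactly the definition of positive spanning. Equivalently, since all three gaps are $<\pi$ the vectors lie in no closed half-plane through the origin, so no $\bw\neq 0$ satisfies $\bw^{T}\bh_i \geq 0$ for all $i$, which is the half-plane characterization of positive spanning used via Farkas' Lemma.

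Finally, for the ``furthermore'' claim I would note that up to isomorphism the only tournaments on $3$ nodes are the $3$-cycle and the transitive tournament, so it suffices to produce a single representation of the transitive tournament that fails to positively span. Taking three vectors at nearly equal angles $\theta_1 < \theta_2 < \theta_3$ all within an arc of length $<\pi$, the sign rule yields the transitive order $3 \succ_{\bT} 2 \succ_{\bT} 1$ (with all forms nonzero), yet these vectors lie in an open half-plane and hence do not positively span $\cR^2$. This establishes that the $3$-cycle is the unique tournament on $3$ nodes all of whose representations positively span $\cR^2$. I expect the main obstacle to be bookkeeping the sign and $\bmod\,2\pi$ conventions carefully, so that the $3$-cycle hypothesis genuinely forces the arc-sum to equal $2\pi$, and making the open-versus-closed half-plane distinction precise enough that the positive-spanning conclusion is airtight rather than merely suggestive.
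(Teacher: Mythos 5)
Your proof is correct and rests on essentially the same geometric idea as the paper's: the $3$-cycle condition forces the three consecutive angular gaps between $\bh_1,\bh_2,\bh_3$ to each be strictly less than $\pi$, and vectors with that spacing property positively span $\cR^2$ (the paper phrases this as a Farkas/supporting-hyperplane contradiction, you additionally give the direct wedge-covering argument). Your write-up is in fact more complete than the paper's own proof, which leaves the angular claims at sketch level (writing ``$\le 180$ degrees'' where strict inequality is needed) and never addresses the ``furthermore'' clause at all, whereas you settle it explicitly by exhibiting a representation of the transitive tournament confined to an open half-plane.
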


The above lemma immediately implies that a coned $3$-cycle is the only forbidden configuration for rank $2$ matrices. This is indeed true. However, for the purposes of generalizing our result (which as we will see will be useful when discussing the higher dimension case), we will view the $3$-cycle as a special case of a doubly regular tournament (defined next).

\begin{definition}
A tournament $\bT$ is said to be \emph{doubly regular} if  $|\bT^+_i| = |\bT^+_j|$ for all $i,j$ and $|\bT^+_i \cap \bT^+_j| = k$ for some fixed $k$ for all $i \neq j$.
\end{definition}

Trivially the $3$-cycle is the only $3$-doubly regular tournament. The following lemma establishes that the flip class of a coned $3$ doubly regular tournament only contains itself. 
\begin{proposition}
The flip class of $3$-doubly-regular-cone does not contain any other tournament.
\end{proposition}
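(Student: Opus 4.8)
The plan is to reduce the statement to a finite verification on $4$-node tournaments and then settle it directly. The $3$-doubly-regular tournament is the $3$-cycle, so a $3$-doubly-regular-cone $\bT$ lives on $4$ nodes: a distinguished node $i^*$ that either beats or loses to all three nodes of a $3$-cycle. Up to isomorphism there are exactly four tournaments on $4$ nodes, and on $4$ nodes the score sequence is a complete isomorphism invariant; the four types are $(0,1,2,3)$, $(1,1,1,3)$, $(0,2,2,2)$ and $(1,1,2,2)$. The two coned $3$-cycles are precisely the dominant cone $(1,1,1,3)$ and the dominated cone $(0,2,2,2)$, and flipping the cut $\{i^*\}$ sends one to the other, so these two already lie in a common flip class. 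Hence the proposition is equivalent to showing that no cut-flip of a coned $3$-cycle can produce a transitive tournament $(0,1,2,3)$ or a $(1,1,2,2)$ tournament.

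To carry this out I would enumerate the images $\phi_S(\bT)$. Since $\phi_S=\phi_{\bar{S}}$ and $\phi_{\emptyset}$ is the identity, there are only $2^{4-1}=8$ distinct cuts; moreover the cyclic automorphism of the underlying $3$-cycle together with the distinguished status of $i^*$ collapses these into just four orbits, with representatives $\emptyset$, a single cycle-vertex, the cone vertex $\{i^*\}$ alone, and the cone vertex together with one cycle-vertex (the remaining cuts are complements of these). For one representative of each orbit I would compute the score sequence of $\phi_S(\bT)$ and observe that it is always $(1,1,1,3)$ or $(0,2,2,2)$, i.e. again a coned $3$-cycle. Because score sequence determines the isomorphism type on $4$ nodes, this establishes that the flip class equals exactly $\{\,(1,1,1,3),\,(0,2,2,2)\,\}$, as claimed.

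The point to be careful about is that a cut-flip does \emph{not} preserve the score sequence in general — a single-vertex flip changes out-degrees — so one cannot short-circuit the argument by invoking an invariant; the content is precisely that, for this particular tournament, every cut happens to land in one of the two admissible score classes. I therefore expect the only genuine obstacle to be the bookkeeping: grouping the eight cuts into the correct orbits under $\mathrm{Aut}(\bT)$ so that none is overlooked. As a cleaner alternative for the transitive case I would note that, by Lemma~\ref{lem:pos} applied to Lemma~\ref{thm:lem3nodes}, the coned $3$-cycle is a forbidden configuration for rank $2$, so by Theorem~\ref{thm:forb} its entire flip class is forbidden for rank $2$; since the transitive tournament is representable in $\cR^2$ (place the $\bh_i$ at strictly increasing angles inside a half-plane), it cannot lie in this flip class, leaving only the elementary check that $(1,1,2,2)$ lies in the other class via a single cut.
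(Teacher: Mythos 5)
Your proposal is correct and is essentially the paper's own argument: the paper's proof is precisely the brute-force verification ("easily verified by checking all tournaments in $\cF(\bT)$"), which you carry out explicitly by enumerating the eight cuts up to complementation, reducing to four orbits under the automorphism of the coned $3$-cycle, and using the fact that score sequences classify $4$-node tournaments. Your added details (the orbit bookkeeping, and the alternative rank-$2$ exclusion of the transitive tournament via Lemma \ref{lem:pos} and Theorem \ref{thm:forb}) are sound but do not change the underlying approach.
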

\begin{proof}
This is easily verified by checking all tournaments in $\cF(\bT)$ where $\bT$ is the coned $3$-cycle.
\end{proof}
\begin{theorem}  $3$-doubly-regular-cone is the only forbidden configuration for the rank $2$ tournament class.
\end{theorem}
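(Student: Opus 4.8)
The plan is to prove the statement in two halves: first, that the $3$-doubly-regular-cone (equivalently, the coned $3$-cycle) genuinely \emph{is} a forbidden configuration for the rank $2$ class, and second, that nothing else can be one. The first half is essentially a packaging of the machinery already built. Since Lemma \ref{thm:lem3nodes} shows that every $2$-dimensional representation inducing a $3$-cycle positively spans $\cR^2$, Lemma \ref{lem:pos} (applied with $\bR$ equal to the $3$-cycle) immediately yields that the rank $2$ class forbids the coned $3$-cycle. To upgrade ``forbidden'' to ``forbidden configuration'' I must verify minimality, i.e. that every proper sub-tournament is realizable in rank $2$. The proper sub-tournaments live on at most $3$ nodes, and every $3$-node sub-tournament of the coned $3$-cycle is either the $3$-cycle itself or a transitive triple; each is realizable in rank $2$ (the $3$-cycle by Lemma \ref{thm:lem3nodes}'s representation, a transitive triple by placing three points at distinct angles inside a common half-plane), so none is forbidden and minimality holds.

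For uniqueness I would reduce an arbitrary forbidden configuration to an $\bR$-cone and then eliminate all $\bR$-cones but one. Let $\bT$ be any forbidden configuration. By Theorem \ref{thm:forb} its entire flip class consists of forbidden configurations, and by the proposition that every flip class contains an $\bR$-cone, that flip class contains an $\bR$-cone $\mathbf{C} = \{i^*\}\cup\bR$ which is itself a forbidden configuration. I then split on $\bR$. If $\bR$ is transitive, then $\mathbf{C}$ is transitive (place $i^*$ at the top or bottom of the order it beats or loses to), hence realizable in rank $2$ and therefore not forbidden, contradicting that $\mathbf{C}$ is a forbidden configuration. If $\bR$ is not transitive it contains a $3$-cycle on some $\{a,b,c\}$, so $\{i^*,a,b,c\}$ induces a coned $3$-cycle inside $\mathbf{C}$; when $|\bR|>3$ this is a \emph{proper} sub-tournament that is forbidden by the first half, contradicting minimality. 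Hence $|\bR|=3$, so $\bR$ is the $3$-cycle and $\mathbf{C}$ is exactly the coned $3$-cycle. Finally, since $\bT$ lies in the flip class of $\mathbf{C}$ and the proposition that the flip class of the $3$-doubly-regular-cone contains no other tournament, $\bT$ must itself be the $3$-doubly-regular-cone.

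The only external facts the argument needs are that every non-transitive tournament contains a $3$-cycle and that transitive tournaments are realizable in rank $2$; both are standard, the latter via the explicit angular (arc) embedding. I expect the main care-points, rather than any deep obstacle, to be in the uniqueness step: one must invoke the flip-class invariance of the ``forbidden configuration'' property (Theorem \ref{thm:forb}) and not merely of ``forbidden,'' and one must genuinely check that the coned $3$-cycle extracted from $\bR$ is a proper sub-tournament (which is exactly why the case $|\bR|>3$ is needed to separate it from the $|\bR|=3$ case). It is also worth stating explicitly that the argument uses only the easy direction of the forbidden/realizable correspondence, namely that a tournament realizable in rank $2$ is a sub-tournament of itself and hence not forbidden; because the converse direction is never invoked, no comparison or nesting of rank classes across different dimensions is required, which keeps the proof self-contained relative to the preceding lemmas.
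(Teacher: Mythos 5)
Your proof is correct, and it is worth noting that the paper itself supplies no proof of this theorem: it is stated bare, with the preceding text asserting it follows ``immediately'' from Lemma \ref{thm:lem3nodes}. Your first half (forbiddenness of the coned $3$-cycle via Lemma \ref{thm:lem3nodes} plus Lemma \ref{lem:pos}, then the minimality check on the $3$-node sub-tournaments) is exactly the route the paper gestures at. The ``only'' half is where you add genuine content. The paper's implicit path to uniqueness would go through the later characterization of Theorem \ref{thm:rank2char} (rank $2$ $\iff$ locally transitive), since a tournament contains no coned $3$-cycle precisely when every out- and in-neighbourhood is $3$-cycle-free, i.e.\ precisely when it is locally transitive; that argument needs the full strength of a theorem which appears \emph{after} this one in the paper. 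Your argument instead reduces an arbitrary forbidden configuration to an $\bR$-cone in its flip class (via Theorem \ref{thm:forb} and the proposition that every flip class contains an $\bR$-cone) and runs the trichotomy: transitive $\bR$ makes the cone transitive, hence realizable, hence not forbidden; non-transitive $\bR$ on more than $3$ nodes makes the coned $3$-cycle a \emph{proper} forbidden sub-tournament, violating minimality; so $\bR$ must be the $3$-cycle. This buys two things: it needs only the elementary rank-$2$ realizability of transitive tournaments rather than Theorem \ref{thm:rank2char}, and it respects the paper's ordering of results, so no circularity can be suspected. One caveat you inherit from the paper rather than create: the two orientations of the coned $3$-cycle (score sequences $(3,1,1,1)$ and $(0,2,2,2)$) are non-isomorphic yet lie in the same flip class, so ``the'' $3$-doubly-regular-cone --- in the theorem statement, in the proposition you cite for the final step, and in your conclusion --- must be read as identifying both orientations; with that reading, your appeal to the proposition is sound and the proof is complete.
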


We have thus far established that any rank $2$ tournament on $n$ nodes forbids the $3$-doubly-regular-cone. The advantage of this result is that we can go one step further and explicitly characterize the rank $2$ tournament class. To do this, we need the definition of a previously studied tournament class \cite{loctrans}.

\begin{definition}
A tournament $\bT$ is called \emph{locally transitive} if for every node $i$ in $\bT$, both $\bT(\bT^+_i)$ and $\bT(\bT^{-}_i)$ are transitive tournaments
\end{definition}
Before seeing why locally transitive tournaments are relevant to our study, we first show that they are intimately connected to transitive tournaments via the following characterization. 
\begin{theorem}(\textbf{Connection between Transitive and Locally Transitive Tournaments})
\label{thm:ltflip}
The set of all non isomorphic locally transitive tournaments on $n$ nodes is equivalent to the flip class of the transitive tournament on $n$ nodes. 
\end{theorem}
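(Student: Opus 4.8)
The plan is to prove the two set inclusions separately, and the key enabling observation is that, by the results immediately preceding this theorem, the rank $2$ tournament class coincides with the class of locally transitive tournaments. Indeed, a tournament fails to be locally transitive exactly when some out-neighbourhood or some in-neighbourhood contains a $3$-cycle, i.e. exactly when it contains an induced source-coned or sink-coned $3$-cycle on four nodes; these two tournaments are precisely the members of the flip class of the $3$-doubly-regular-cone (by the Proposition stating that this flip class contains only these two tournaments). Since the rank $2$ class is hereditary (restrict any realising representation to a subset of nodes), and since the preceding theorem gives that its unique forbidden configuration is this flip class, a tournament is rank $2$ iff it avoids both coned $3$-cycles iff it is locally transitive. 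I will use this identification $\{\text{rank }2\}=\{\text{locally transitive}\}$ throughout; let $\bT^{*}$ denote the transitive tournament on $n$ nodes.

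For the inclusion $\cF(\bT^{*})\subseteq\{\text{locally transitive}\}$, note first that $\bT^{*}$ is trivially rank $2$ (place the nodes at distinct angles inside a small arc). Moreover flip-equivalence preserves representability: exactly as in the proof of Theorem \ref{thm:forb}, if $\bT'=\phi_S(\bT)$ is realised by some $\bH'=\{\bh'_1,\dots,\bh'_n\}$, then negating $\bh'_i$ for $i\in S$ realises $\bT$ in the same dimension. Hence every tournament cut-equivalent to $\bT^{*}$ is again rank $2$, and therefore locally transitive.

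For the converse inclusion $\{\text{locally transitive}\}\subseteq\cF(\bT^{*})$, I would take a locally transitive $\bT$, use the identification to obtain a two-dimensional representation $\bH=\{\bh_1,\dots,\bh_n\}\in\cR^2$ with $\bT=\bT[\bH]$, and normalise each $\bh_i$ to the unit circle (positive scaling preserves every sign $\bh_i^{T}\bA^{\rot}\bh_j$). Writing $\bh_i=(\cos\theta_i,\sin\theta_i)$, a direct computation gives $\bh_i^{T}\bA^{\rot}\bh_j=\sin(\theta_i-\theta_j)$, so $i\succ_{\bT}j \iff \theta_i-\theta_j\in(0,\pi)\pmod{2\pi}$. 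I then pick an open half-circle whose two boundary directions avoid all of $\pm\bh_1,\dots,\pm\bh_n$ (possible since these are finitely many), and set $S=\{\,i:\bh_i\text{ lies outside this half-circle}\,\}$. Negating the vectors indexed by $S$ moves every point into the chosen half-circle, and by the negation/flip correspondence above the resulting tournament is exactly $\phi_S(\bT)$. Finally, when all angles lie in an open arc of length $\pi$ their pairwise differences lie in $(-\pi,\pi)$, so $\sin(\theta_i-\theta_j)>0\iff\theta_i>\theta_j$; the induced tournament is thus the total order by angle, a transitive tournament. Since all transitive tournaments on $n$ nodes are isomorphic, $\phi_S(\bT)$ is isomorphic to $\bT^{*}$ and hence $\bT\in\cF(\bT^{*})$.

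The routine parts are the $\sin$ computation and the half-circle reduction; the real content lies in the bridge $\{\text{rank }2\}=\{\text{locally transitive}\}$, for which I must carefully (i) confirm the rank $2$ class is hereditary, (ii) invoke that a hereditary class equals the set of tournaments omitting all of its forbidden configurations, and (iii) check that the coned-$3$-cycle's flip class consists of exactly the two orientations (source-cone and sink-cone), so that ``omits the forbidden flip class'' is literally the local transitivity condition on both in- and out-neighbourhoods. I expect step (iii), and the clean use of the negation/flip correspondence, to be where care is most needed. If instead one wants a self-contained combinatorial argument that avoids the representation altogether, the crux becomes proving directly that every locally transitive tournament admits a cyclic vertex ordering in which each out-neighbourhood is a contiguous arc; there I anticipate the inductive re-insertion of a deleted vertex into this cyclic order (arguing that its in-set must cover exactly one insertion slot) to be the hardest step.
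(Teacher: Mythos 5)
Your forward inclusion is sound and follows a genuinely different route from the paper's: you prove $\cF(\bT^{*})\subseteq\{\text{locally transitive}\}$ by noting the transitive tournament is rank $2$, that flipping preserves rank-$2$ representability (the negation trick from Theorem \ref{thm:forb}), and that rank-$2$ tournaments avoid coned $3$-cycles (Lemma \ref{thm:lem3nodes} plus Lemma \ref{lem:pos}, plus hereditariness); the paper instead argues this direction combinatorially, partitioning each neighbourhood into four blocks that flip wholesale. Your half-circle computation is also correct as far as it goes: \emph{given} a rank-$2$ representation of a locally transitive $\bT$, normalizing to the unit circle and negating the vectors outside a generic open half-circle does yield a flip $\phi_S(\bT)$ that is transitive.

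The genuine gap is in how you obtain that representation: the converse direction rests entirely on the bridge ``locally transitive $\Rightarrow$ rank $2$ representable,'' which you extract from the uniqueness half of the unlabeled theorem (``the $3$-doubly-regular-cone is the \emph{only} forbidden configuration for rank $2$''). The results proved before Theorem \ref{thm:ltflip} (Lemma \ref{thm:lem3nodes} and Lemma \ref{lem:pos}) establish only the \emph{forbids} half; the \emph{only} half is, given hereditariness, exactly equivalent to the bridge you are assuming, and the paper supplies no independent proof of it. In the paper's own development that implication is Theorem \ref{thm:rank2char}, whose proof begins ``Assume $\bT$ is locally transitive. Then by Theorem \ref{thm:ltflip}\ldots'' --- that is, it is derived \emph{from} the very statement you are trying to prove. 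So as written, your converse assumes its own conclusion. The paper avoids this with a purely combinatorial argument: for locally transitive $\bT$ and any node $i$, it shows every $3$-cycle of $\bT$, and every transitive triple meeting the cut, becomes transitive in $\phi_{\bT^+_i}(\bT)$, so $\phi_{\bT^+_i}(\bT)$ has no $3$-cycle and is transitive, whence $\bT\in\cF(\bT^{*})$. To repair your proof you would need to execute the self-contained argument you defer at the end (a direct construction of a cyclic ordering with contiguous out-neighbourhoods, equivalently of a rank-$2$ representation, for an arbitrary locally transitive tournament); that deferred step is precisely where the real content of this direction lives.
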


We will see next that this key result allows us to immediately characterize the rank $2$ tournament class. We will see later (Section \ref{sec:dualq}) that this result is also crucial in determining an upper bound on the dimension needed to represent a given tournament. 

\begin{theorem} (\textbf{Characterization of rank $2$ tournaments})
\label{thm:rank2char}
A tournament $\bT$ on $n$ nodes is locally transitive  if and only if there exists a skew symmetric matric $\bM \in \cR^{n \times n}$ with rank($\bM$) = 2 such that the $\bT\{\bM\} = \bT$. 
\end{theorem}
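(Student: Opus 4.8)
The plan is to prove the two directions of the equivalence separately, using Theorem \ref{thm:ltflip} for the ``locally transitive implies rank $2$'' direction and the forbidden-configuration results for the converse, so that the bulk of the argument is inherited from the preceding results.

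First, for the direction showing that a locally transitive $\bT$ admits a rank $2$ realization, I would begin by noting that the rank $2$ class is closed under flips: if $\bT = \bT[\bH]$ with $\bH = \{\bh_1,\dots,\bh_n\} \subseteq \cR^2$ and $\bT' = \phi_S(\bT)$, then setting $\bh'_i = -\bh_i$ for $i \in S$ (exactly as in the proof of Theorem \ref{thm:forb}) gives $\bT' = \bT[\bH']$, and since negating vectors leaves their span unchanged, $\bH'$ still has rank $2$. Thus the rank $2$ class is a union of flip classes. I would then write down an explicit rank-$2$ realization of the transitive tournament, e.g. $\bh_k = (\cos(k\epsilon),\sin(k\epsilon))$ with $n\epsilon < \pi$, for which $\bh_i^{T}\bA^{\rot}\bh_j = r_i r_j\sin(\theta_i - \theta_j)$ is positive exactly when $i > j$; these $n$ vectors span $\cR^2$, so $\bM = \bH^{T}\bA^{\rot}\bH$ has rank exactly $2$. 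Combining this with Theorem \ref{thm:ltflip}, the entire flip class of the transitive tournament, which is precisely the set of locally transitive tournaments, lies in the rank $2$ class.

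For the converse, that a rank $2$ tournament must be locally transitive, I would argue by contrapositive. If $\bT$ is not locally transitive then some node $i$ has $\bT(\bT^+_i)$ or $\bT(\bT^-_i)$ intransitive; since a tournament is intransitive if and only if it contains a $3$-cycle, there exist three nodes forming a $3$-cycle inside $\bT^+_i$ (respectively $\bT^-_i$), and together with $i$ these four nodes induce a coned $3$-cycle. By Lemmas \ref{thm:lem3nodes} and \ref{lem:pos} the coned $3$-cycle is forbidden for the rank $2$ class, and by Theorem \ref{thm:forb} so is its whole flip class, which contains both the dominating and dominated orientations (interchanged by $\phi_{\{i\}}$). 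Hence $\bT$ contains a forbidden sub-tournament and cannot be rank $2$.

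Assembling the two directions gives the equivalence and, as a by-product, shows that avoiding the coned $3$-cycle is sufficient (not merely necessary) for rank $2$ membership. I expect the only genuinely new computation to be the semicircle representation of the transitive tournament, with the one delicate point being rank \emph{exactly} $2$ rather than at most $2$; this holds whenever $n \ge 2$ because distinct semicircle directions are pairwise non-parallel, and it is preserved under flips since negation does not create parallel pairs. The main conceptual obstacle, namely identifying the flip class of the transitive tournament with the locally transitive tournaments, is exactly what Theorem \ref{thm:ltflip} supplies, so once that result is available the present theorem follows by assembling it with the forbidden-configuration machinery.
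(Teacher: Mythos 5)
Your proof is correct, but your converse direction takes a genuinely different route from the paper. The forward direction matches the paper's in substance: both invoke Theorem \ref{thm:ltflip}, exhibit an explicit rank-$2$ representation of the transitive tournament (the paper uses $\bM = \bu\bv^{T}-\bv\bu^{T}$ with $\bu$ sorted and $\bv$ the all-ones vector; you use points on a semicircular arc), and transport it across the flip $\phi_S$ by negating the vectors indexed by $S$. For the converse, however, the paper argues directly and algebraically: it writes an arbitrary rank-$2$ skew-symmetric matrix as $\bM = \bu\bv^{T}-\bv\bu^{T}$ and shows, by a case analysis over the $\pm$ sign patterns of the coordinates of $\bu$ and $\bv$, that the dominance relation restricted to $\bT^{+}_i$ (and to $\bT^{-}_i$) is transitive for every node $i$. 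You instead argue by contrapositive through the forbidden-configuration machinery of Section \ref{sec:flip}: a failure of local transitivity yields a coned $3$-cycle as an induced sub-tournament, which is forbidden for the rank-$2$ class by Lemmas \ref{thm:lem3nodes} and \ref{lem:pos}, with the dominated orientation covered either by the definition of $\bR$-cone or by the flip $\phi_{\{i\}}$ via Theorem \ref{thm:forb}; hence $\bT$ admits no rank-$2$ realization. Your route avoids the paper's sign case analysis entirely, reuses only results already established (so there is no circularity, since none of those lemmas depend on this theorem), and as a by-product establishes the sufficiency half of the claim that the coned $3$-cycle is the only forbidden configuration for rank $2$ --- a statement the paper records as a separate theorem without an explicit proof. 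The paper's route, in exchange, is self-contained for the converse and does not lean on the Farkas/positive-span apparatus. Your attention to getting rank exactly $2$ (pairwise non-parallel vectors, preserved under negation) is a detail the paper glosses over, and it is handled correctly.
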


It is perhaps surprising that a purely structural description of a tournament class namely that of local transitivity turns out to exactly characterize the rank $2$ tournament class. To the best of our knowledge, this characterization appears novel and hasn't been previously noticed. One of the interesting consequence of the above characterization is that the minimum feedback arc set problem on rank $2$ tournaments can  be solved using a standard quick sort procedure. This is formalized below. 
\begin{theorem} (\textbf{Minimum Feedback Arc Set is Poly-time Solvable for Rank $2$ tournaments})
\label{thm:mfast}
Let $\bT$ be a locally transitive tournament on $n$ nodes and let  $\sigma_1$ be the permutation returned by running a standard quick sort  algorithm choosing $1$ as the initial pivot node and where the outcome of comparison between items $i$ and $j$ is given according to  the sign of $\bT$. Let $\sigma_k$ be obtained from $\sigma_1$ by $k-1$  clockwise cyclic shifts for $k \in [n]$. Let $E_k$ be the feedback arc set of $\sigma_k$ w.r.t $\bT$. Then $\min_k |E_k|$ achieves the minimum size of the feedback arc set for $\bT$.
\end{theorem}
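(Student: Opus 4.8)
The plan is to first translate the rank-$2$ hypothesis into geometry. By Theorem~\ref{thm:rank2char}, $\bT$ admits a representation $\bH=\{\bh_1,\dots,\bh_n\}\subset\cR^2$ with $\bT=\bT[\bH]$. Writing $\bh_i=r_i(\cos\theta_i,\sin\theta_i)$ and computing directly, $\bh_i^{T}\bA^{\rot}\bh_j = r_ir_j\sin(\theta_i-\theta_j)$, so that $i\succ_{\bT}j \iff (\theta_i-\theta_j)\bmod 2\pi \in (0,\pi)$. Thus $\bT$ is an \emph{angular} tournament: the nodes sit at $n$ distinct points on a circle, and each edge points to the endpoint that leads within a half-turn. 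This induced circular order of the nodes is the object I would exploit throughout.

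Next I would identify exactly what the algorithm computes. Pivoting on node $1$ splits the remaining nodes into $\bT^+_1=\{j:\theta_j\in(\theta_1-\pi,\theta_1)\}$ and $\bT^-_1=\{j:\theta_j\in(\theta_1,\theta_1+\pi)\}$. By local transitivity both $\bT(\bT^+_1)$ and $\bT(\bT^-_1)$ are transitive, and Quicksort on a transitive tournament returns its unique topological order regardless of pivot choices (a one-line induction). On each half this order coincides with sorting by angle, since any two nodes lying in a common arc of length $<\pi$ compare by angle. Concatenating, $\sigma_1$ is precisely the circular order read off monotonically starting at the antipode of node $1$; that is, an \emph{interval cut} of the circular order. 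A clockwise cyclic shift advances the cut past exactly one node, so $\sigma_1,\dots,\sigma_n$ enumerate the $n$ interval cuts obtained by cutting the circular order at each of its $n$ gaps, and $\min_k|E_k|$ is the minimum feedback arc set over all interval cuts. (Each Quicksort plus the $n$ shifts is computed in $\mathcal{O}(n^2)$ time, matching the claimed bound.)

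The crux is to show this interval-cut minimum equals the global minimum feedback arc set. Here I would first establish a clean characterization: for the interval cut at angle $\alpha$, a pair with edge $i\succ_{\bT}j$ is a backward arc if and only if $\alpha$ lies on the short arc (of length $<\pi$) running from the loser $j$ to the winner $i$. Consequently $f(\alpha):=|E(\text{cut at }\alpha)| = |\{\text{edges whose short arc contains }\alpha\}|$ is a piecewise-constant circular function whose $n$ distinct values are exactly the $|E_k|$. It then remains to prove that no ordering beats $\min_\alpha f(\alpha)$. The route I would take is an exchange argument: swapping a backward adjacent pair strictly decreases the feedback arc set, so some optimal ordering is a directed Hamiltonian path; one then converts an arbitrary ordering into an interval cut without increasing the count of backward arcs, or equivalently certifies $\min_\alpha f(\alpha)$ by a matching packing of arc-disjoint cycles.

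The main obstacle is exactly this last reduction. The adjacent-transposition step alone is insufficient, because an angular tournament can contain directed Hamiltonian paths that are \emph{not} interval cuts (they may wind around the circle more than once), and such paths can be strictly suboptimal. I expect the cleanest completion to be a potential-based uncrossing: among all minimum-feedback-arc-set orderings choose one minimizing total circular displacement; if it is not a monotone sweep of the circle, locate two circularly adjacent nodes whose relative placement ``crosses,'' and relocate the intervening block, showing the feedback arc set does not increase while the potential strictly drops. This contradicts minimality unless the ordering is already an interval cut, forcing the global optimum to be one of the $\sigma_k$ and completing the proof.
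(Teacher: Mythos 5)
Your first two paragraphs are sound and in fact give a cleaner geometric account than the paper: the angular representation obtained from Theorem \ref{thm:rank2char}, the identification of $\sigma_1$ as an interval cut of the induced circular order, and the observation that the $n$ cyclic shifts enumerate all $n$ interval cuts are all correct (the paper reaches the same conclusion combinatorially, by arguing that changing the Quicksort pivot only cyclically shifts the output). But the actual content of the theorem is your third step --- that no permutation whatsoever beats the best interval cut --- and there you have a genuine gap, which you yourself flag. The ``potential-based uncrossing'' is a plan, not a proof: the crucial claim that one can relocate the intervening block \emph{without increasing the feedback arc set} while strictly decreasing the circular-displacement potential is exactly the statement that needs proving, and nothing in your sketch establishes it. A block move in a tournament changes the orientation status of every pair straddling the block, so there is no a priori monotonicity to exploit; indeed your own observation that directed Hamiltonian paths can wind around the circle and be strictly suboptimal shows precisely how local-improvement arguments can stall at non-interval configurations.

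For comparison, the paper closes this gap with different machinery: since a locally transitive $\bT$ lies in the flip class of a transitive tournament, i.e.\ $\bT=\phi_S(\bT')$ with $\bT'$ transitive, the nodes split along the transitive order of $\bT'$ into $2k+1$ alternating groups (maximal runs inside $S$ or $\bar{S}$); an exchange argument shows some optimal ranking keeps each group contiguous, which reduces the problem to ranking $2k+1$ groups, and an induction on $k$ (with an explicit inequality comparing candidate rankings $R_{i,j}$) shows the optimum among group rankings is a cyclic shift. If you want to complete your geometric route, you would need an analogous reduction --- for instance, proving that some optimal ordering is monotone on each maximal arc of the circular order and then inducting on the number of arcs --- which is essentially the paper's group argument transported to your angular picture. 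Until that step is supplied, your proposal establishes what the algorithm computes but not that it is optimal.
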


\begin{proof}(Sketch)
The proof involves two steps: first arguing that by fixing any pivot, quick sort would return a ranking that is a cyclic shift of $\sigma_1$. The second step involves inductively arguing that one of the cyclic shift must necessarily minimize the feedback arc set. 
\end{proof}


\subsection{Rank $4$ Tournaments}
\label{sec:rank4}
We now turn to rank $4$ tournaments. We could have directly considered rank $d$ tournaments, but it turns out that what we can show is a slightly stronger result for rank $4$ tournaments than the general case and so we focus on them separately. While it is arguably simple in the rank $2$ case to identify the tournament that necessitates the positive spanning property, it is not immediately clear in the rank $4$ case. A first guess would be to consider the regular tournaments (as the $3$ cycle for rank $2$ is also a regular tournament) on $5$ nodes or $7$ nodes. However, these turn out to be insufficient as one can construct counter examples of regular tournaments on up to $7$ nodes with representations that don't span the entire $\cR^4$. In fact, as we had defined earlier, the right way to generalize to higher dimension turns out to be using doubly regular tournaments.

\begin{theorem} 
\label{thm:r4}
The $11$-doubly-regular-cone is a  forbidden configuration for rank $4$ tournament class.
\end{theorem}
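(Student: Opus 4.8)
The plan is to invoke Lemma \ref{lem:pos}: an $\bR$-cone is forbidden for the rank $d$ class whenever \emph{every} representation of $\bR$ positively spans $\cR^d$, so it suffices to take $\bR$ to be the $11$-node doubly regular tournament and to show that every $\bH = \{\bh_1,\dots,\bh_{11}\} \subset \cR^4$ with $\bR[\bH] = \bR$ positively spans $\cR^4$. First I would record the combinatorial data I get for free: such an $\bR$ exists (the quadratic-residue / Paley tournament on $\mathbb{F}_{11}$, since $11 \equiv 3 \pmod 4$), it is $5$-regular, and by the doubly regular condition every pair of distinct nodes has exactly $k = (11-3)/4 = 2$ common out-neighbours. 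It is also worth recording the rigidity of its $\pm 1$ skew-adjacency matrix $\bM$: one checks $\bM\mathbf{1}=0$ and $\bM^2 = J - 11\,I$, so $\bM$ has rank $10$ with all nonzero singular values equal to $\sqrt{11}$. This extreme spectral flatness is the structural feature I expect to force spanning, and it is exactly what degrades for the smaller doubly regular tournaments.

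Next I would set up the contradiction. Suppose $\bH$ does not positively span $\cR^4$; then by Farkas' lemma there is a nonzero $\bu$ with $\bu^T\bh_i \ge 0$ for all $i$, so all eleven vectors lie in the closed halfspace with inner normal $\bu$. Taking $\bh_0 = (\bA^{\rot})^{-T}\bu$ (legitimate since $\bA^{\rot}$ is invertible) gives $\bh_0^T \bA^{\rot}\bh_i = \bu^T\bh_i \ge 0$, so $\{\bh_0,\bh_1,\dots,\bh_{11}\}$ essentially realizes the $12$-node cone over $\bR$ in $\cR^4$ (node $0$ beats everyone). The only slack is the possible equalities $\bu^T\bh_i = 0$, which I would remove by a generic perturbation of $\bu$ or by handling boundary nodes separately. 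So, modulo this boundary bookkeeping, the task becomes: no rank-$\le 4$ skew-symmetric sign pattern can equal that of the coned $11$-doubly-regular tournament.

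The core of the argument -- and the step I expect to be the genuine obstacle -- is to rule this out using the $k=2$ design structure together with the constraint $\dim = 4$. The natural ingredients to combine are (i) the halfspace reading of the tournament, in which each node $i$ induces the open halfspace with normal $(\bA^{\rot})^{T}\bh_i$, so that the doubly regular condition says these eleven halfspaces meet the point set in a perfectly balanced incidence pattern ($|\bR^{+}_i| = 5$ and $|\bR^{+}_i \cap \bR^{+}_j| = 2$ for all $i \neq j$); and (ii) the vertex-transitivity of the Paley automorphism $x \mapsto x+1$, which should let me reduce to a maximally symmetric configuration and argue that the balance forces $\sum_i \bh_i$ to be too constrained to satisfy $\bu^T\sum_i \bh_i > 0$. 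Concretely I would try to convert the incidence identities into a quadratic (second-moment) identity on the Gram data $\{\bh_i^T \bA^{\rot}\bh_j\}$ that is incompatible, at $n=11$ and $d=4$, with all eleven points lying in one halfspace -- mirroring the way the $3$-cycle in Lemma \ref{thm:lem3nodes} forces the plane to be positively spanned.

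The delicate point is quantitative. A generic spectral / sign-rank bound of Forster type appears too weak at these small parameters: for the $8$-node cone over the $7$-node doubly regular tournament the corresponding estimate does not exceed $4$, which is consistent with the explicitly mentioned counterexamples for regular tournaments on up to $7$ nodes, whereas at $n=11$ one must squeeze the constant out of the exact design rather than out of norms alone. Establishing this clean, dimension-$4$-specific incompatibility is where the real work lies. Once it yields the contradiction, Lemma \ref{lem:pos} gives that the $11$-doubly-regular-cone is forbidden for the rank $4$ tournament class, and Theorem \ref{thm:forb} upgrades this to its entire flip class $\cF(\bR\text{-cone})$.
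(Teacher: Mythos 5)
Your setup is sound and matches the paper's framing: reducing via Lemma \ref{lem:pos} and Farkas' lemma to the claim that every $4$-dimensional representation of the $11$-node doubly regular tournament positively spans $\cR^4$ is equivalent to supposing a representation $\{\bh_1,\ldots,\bh_{12}\}\subset\cR^4$ of the coned tournament and seeking a contradiction, which is exactly where the paper also starts. Your observation that a Forster-type spectral bound is too weak at these parameters is also correct (the paper's general Theorem \ref{thm:drt} only yields forbidden doubly-regular-cones of size roughly $107$ for rank $4$). But the proposal stops precisely at the decisive step: you never establish the ``dimension-$4$-specific incompatibility.'' The hoped-for route --- converting the design identities $|\bR^+_i|=5$, $|\bR^+_i\cap\bR^+_j|=2$ into a second-moment identity on the Gram data $\{\bh_i^T\bA^{\rot}\bh_j\}$ that contradicts the one-halfspace condition --- is stated as an intention (``I would try,'' ``where the real work lies''), not carried out, and it is not at all clear it can be carried out: the Gram entries are sign-constrained but their magnitudes are free, so moment identities in the spirit of Lemma \ref{thm:lem3nodes} do not obviously survive the passage from $d=2$ to $d=4$. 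As written, this is a proof plan with the central lemma missing.

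For comparison, the paper closes this gap by entirely different, non-spectral and non-design-theoretic means: a computer-assisted exhaustive elimination. After perturbing to general position so that every $4$ of the $12$ vectors are linearly independent, each further vector is a linear combination of $4$ others, e.g.\ $c_1\bh_1+c_2\bh_2+c_3\bh_3+c_4\bh_4=\bh_5$; pairing such an equation with $\bA^{\rot}\bh_i$ forces sign constraints on the coefficients $c_j$ because every $\bh_j^T\bA^{\rot}\bh_i$ has known sign from the tournament. Enumerating the $2^4$ sign patterns per dependence equation, chaining several equations (eliminating shared vectors and matching identical equations across overlapping tuples), and filtering mechanically leaves the empty set of consistent sign patterns for the $11$-doubly-regular-cone, which is the contradiction. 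If you want to complete your write-up, you must either supply the analytic incompatibility you postulate or fall back on an exhaustive case analysis of this kind; the theorem does not follow from the material you have assembled.
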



Note that while for the rank $2$ case, we were able to prove that the \emph{only} forbidden flip class is the one that contains a coned $3$ cycle, we have not shown that the only forbidden configuration for rank $4$ class is the $11$ doubly regular cone. In fact, we believe that the smallest forbidden tournament for rank $4$ class is the $7$- doubly regular cone. However, we haven't been able to prove this. On the other hand, as we will see in the next section, the result in Theorem \ref{thm:r4} is still stronger than the result for the general rank $d$ tournaments.

Having discussed rank $2$ and rank $4$ cases separately, we next turn our attention to the general rank $d$ tournament class.

\section{ Rank $d$ tournaments and the Hadamard Conjecture}
\label{sec:rankd}
From the understanding of rank $2$ and rank $4$ tournament classes in the previous sections, and noting that the corresponding forbidden configurations are intimately related to doubly regular tournaments, it is tempting to conjecture that this is true in general. 

\begin{conjecture} 
\label{conj:drt}
Rank $2d$ tournament class forbids $4d-1$-doubly-regular-cones. 
\end{conjecture}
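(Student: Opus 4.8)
The plan is to prove Conjecture \ref{conj:drt} through exactly the positive-spanning machinery that settled the rank $2$ case. Since the $3$-cycle is the $3$-doubly-regular tournament and Lemma \ref{thm:lem3nodes} showed that its representations positively span $\cR^{2}$, the natural target here is the statement: \emph{every representation $\bH = \{\bh_1,\dots,\bh_{4d-1}\} \in \cR^{2d}$ realizing a $(4d-1)$-doubly-regular tournament $\bR$ positively spans $\cR^{2d}$}. Granting this, Lemma \ref{lem:pos} forbids the $\bR$-cone for the rank $2d$ class, and Theorem \ref{thm:forb} promotes the conclusion to the entire flip class. So the whole content reduces to a single positive-spanning claim about doubly-regular representations.

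First I would record the spectral skeleton forced by double regularity. Writing $\mathbf{S}$ for the $\pm 1$ skew-symmetric sign matrix of $\bR$ (with $S_{ij}=1 \iff i \succ_{\bR} j$), a direct count using $|\bT^+_i| = 2d-1$ and $|\bT^+_i \cap \bT^+_j| = d-1$ gives $\mathbf{S}\mathbf{S}^{T} = (4d-1)\mathbf{I} - \mathbf{J}$, equivalently $\mathbf{S}^2 = \mathbf{J} - (4d-1)\mathbf{I}$, together with $\mathbf{S}\mathbf{1} = 0$. Hence on $\mathbf{1}^{\perp}$ every singular value of $\mathbf{S}$ equals $\sqrt{4d-1}$. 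Bordering $\mathbf{S}$ by the all-beating apex produces the $4d\times 4d$ skew matrix $\mathbf{L} = \left(\begin{smallmatrix} 0 & \mathbf{1}^{T} \\ -\mathbf{1} & \mathbf{S}\end{smallmatrix}\right)$ satisfying $\mathbf{L}^2 = -(4d-1)\mathbf{I}$, so $\mathbf{L}$ is $\sqrt{4d-1}$ times an orthogonal matrix and is precisely the core of a skew-Hadamard matrix of order $4d$. This is the exact point where the Hadamard conjecture enters: such an $\mathbf{L}$, and indeed a $(4d-1)$-doubly-regular tournament at all, exists if and only if a skew-Hadamard matrix of order $4d$ does.

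With this in hand I would argue by contradiction, directly targeting representability of the cone (which, by Lemma \ref{lem:pos} and the $3$-cycle precedent, is equivalent to the positive-spanning claim). Suppose the $\bR$-cone were realized in $\cR^{2d}$: then some representation $\bH$ of $\bR$ admits an apex $\bh^{*}$ with $(\bh^{*})^{T}\bA^{\rot}\bh_i > 0$ for all $i$, so $\{\bh^{*}\}\cup\bH$ are $4d$ vectors in $\cR^{2d}$ yielding a skew matrix $\bM'$ of rank at most $2d$ whose sign pattern is $\mathbf{L}$; in particular $\langle \bM',\mathbf{L}\rangle = \sum_{i\neq j}|M'_{ij}| > 0$. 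The tools I would combine to reach a contradiction are (i) orthogonality of $\mathbf{L}$, which pins all its singular values to $\sqrt{4d-1}$ and forces full rank $4d$, against (ii) the rank-$2d$ realization, which has at most $2d$ nonzero eigenvalues, via a Forster-type trace inequality relating $\langle\bM',\mathbf{L}\rangle$ to the two spectra after a suitable normalization of $\bM'$. A more combinatorial alternative uses the auxiliary vectors $\bp_k = \sum_j S_{kj}\bh_j$, which satisfy $\bh_k^{T}\bA^{\rot}\bp_k = \sum_j |M_{kj}| > 0$ (writing $M_{ij}=\bh_i^{T}\bA^{\rot}\bh_j$) and, through $\mathbf{S}^{T}\mathbf{S} = (4d-1)\mathbf{I}-\mathbf{J}$, obey $\sum_k \bp_k\bp_k^{T} = (4d-1)\,\bH\bH^{T} - \bz\bz^{T}$ with $\bz=\sum_i\bh_i$; the aim there is to convert these identities into an explicit positive dependency $\sum_i c_i\bh_i = 0$ with all $c_i>0$, which with linear spanning gives positive spanning.

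The hard part is exactly the quantitative gap between signs and geometry. A Forster-type spectral bound applied to $\mathbf{L}$ yields only a representation-dimension lower bound of order $4d/\sqrt{4d-1}\approx 2\sqrt{d}$, whereas the conjecture demands the linear bound $2d$; since $2d \ge 2\sqrt{d}$, the trace route produces \emph{no} contradiction, and closing this quadratic gap cannot be done from the spectral norm alone — one must exploit the exact double-regularity constants to control the magnitudes $|M'_{ij}|$ simultaneously with the rank. This is precisely why only the weaker rank $4$ statement (Theorem \ref{thm:r4}, with $11$ rather than $7$) could be established. Two further obstacles sit underneath: one must separately exhibit a genuine rank-$2d$ representation of $\bR$ so that the $\bR$-cone is a \emph{minimal} forbidden configuration rather than a vacuous one, and the very existence of a $(4d-1)$-doubly-regular tournament is equivalent to the skew-Hadamard conjecture, so an unconditional proof for all $d$ cannot avoid resolving, or assuming, that long-standing conjecture.
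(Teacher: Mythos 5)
The statement you are addressing is a \emph{conjecture} in the paper: the authors offer no proof of it, and your proposal does not (and by your own admission cannot) supply one either, so there is a genuine gap --- but it is a gap you have correctly located rather than papered over. Your reduction via Lemma \ref{lem:pos} and Theorem \ref{thm:forb} to a positive-spanning claim for representations of $(4d-1)$-doubly-regular tournaments is exactly the right framing, your spectral bookkeeping is sound (with the signed adjacency matrix $\mathbf{S}$ of the doubly regular tournament one indeed has $\mathbf{S}\mathbf{S}^{T}=(4d-1)\mathbf{I}-\mathbf{J}$ and $\mathbf{S}\mathbf{1}=0$, so all nontrivial singular values equal $\sqrt{4d-1}$, and the coned matrix is the core of a skew-Hadamard matrix), and your diagnosis of why the Forster route stalls is precisely the quantitative obstruction the paper itself runs into: a spectral bound of the form $\texttt{dim}\sum_i \rho_i^2 \ge n^2$ with all $\rho_i^2 \approx n$ yields only $\texttt{dim}\gtrsim\sqrt{n}\approx 2\sqrt{d}$, a square-root-in-$d$ bound, whereas the conjecture asserts the linear bound $2d$. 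This is exactly the trade the authors make in Theorem \ref{thm:drt}, where Forster's argument is applied to doubly-regular cones of the inflated size $12d^2-1$ so that the square root lands at $2(d-1)$; and it is why the rank-$4$ case (Theorem \ref{thm:r4}) is handled not spectrally at all but by the brute-force sign-pattern elimination described in the appendix, and even there only for the $11$-node cone rather than the conjectured $7$-node one.

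Two smaller points. First, the conjecture as stated only asks that the rank-$2d$ class \emph{forbid} the cone, not that the cone be a forbidden configuration in the paper's minimality sense, so your worry about exhibiting a genuine rank-$2d$ representation of $\bR$ to establish minimality is not actually required by the statement. Second, your observation that an unconditional proof would entail the skew-Hadamard conjecture is consistent with the paper's own remark that the statement should carry the qualifier ``if they exist''; the conjecture is only meaningful for those $d$ for which a $(4d-1)$-doubly-regular tournament exists. In summary: your proposal is an accurate map of the territory and of why the known tools fail, but it contains no new idea that closes the quadratic gap between the Forster bound and the conjectured linear bound, so it should be presented as an analysis of the conjecture's difficulty, not as a proof.
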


Ideally, the conjecture above must have a qualifier `if they exist' for the $4d-1$ doubly regular cone. This is because of the equivalence between doubly regular tournaments on $4d-1$ nodes and Hadamard matrices in $\{+1,-1\}^{4d \times 4d}$ (\cite{hadamard_drt}). A matrix $\bH \in \{+1,-1\}^{n \times n}$ is called Hadamard if $\bH'\bH = nI$ where $I$ is the identity matrix. It is known that there is a bijection between skew Hadamard matrices and doubly regular tournaments \cite{hadamard_drt}. A long standing unsolved conjecture about Hadamard matrices is the following:
\begin{conjecture}\textbf{(Hadamard)}  There exists a Hadamard matrix of order $4d$ for every $d > 0$.
\end{conjecture}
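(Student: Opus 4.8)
The plan is to attempt a proof by explicit construction: exhibit, for every $d > 0$, a matrix $H \in \{+1,-1\}^{4d \times 4d}$ satisfying $H^\top H = 4d \cdot I$. Since the orthogonality condition $H^\top H = nI$ is preserved under several natural operations, the strategy is to assemble a toolkit of base constructions and closure operations, and then argue that their combination reaches every multiple of $4$. First I would establish multiplicative closure: if $H_m$ and $H_n$ are Hadamard of orders $m$ and $n$, then the Kronecker product $H_m \otimes H_n$ is Hadamard of order $mn$, because $(H_m \otimes H_n)^\top (H_m \otimes H_n) = (H_m^\top H_m) \otimes (H_n^\top H_n) = mn \cdot I$. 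Specializing to the order-$2$ matrix $\bigl(\begin{smallmatrix} 1 & 1 \\ 1 & -1 \end{smallmatrix}\bigr)$ yields the Sylvester doubling $H \mapsto \bigl(\begin{smallmatrix} H & H \\ H & -H \end{smallmatrix}\bigr)$, which produces all orders $2^k$ and lets any known order be multiplied by an arbitrary power of two.

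Next I would bring in number-theoretic base cases via the Paley constructions. For a prime power $q$ one forms the Jacobsthal (quadratic-residue) matrix $Q$ over the field $\mathbb{F}_q$, whose Legendre-symbol entries satisfy $Q Q^\top = q I - J$ and $Q J = 0$, where $J$ is the all-ones matrix. When $q \equiv 3 \pmod 4$ the matrix $Q$ is skew, and bordering it with a row and column of ones yields a skew-Hadamard matrix of order $q+1$; when $q \equiv 1 \pmod 4$ a doubled bordering yields order $2(q+1)$. The skew case is precisely the finite-field mechanism that, via the bijection between skew-Hadamard matrices and doubly regular tournaments cited above, would realize the coned tournaments studied in the preceding sections. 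Combining these Paley base cases with Sylvester doubling and Kronecker products already settles a large density of orders $4d$.

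The hard part, and the reason this statement is in fact a long-standing open conjecture rather than a theorem, is that these arithmetic constructions cover only orders of special forms ($q+1$, $2(q+1)$, products thereof, times powers of two), and there is no known sieve or density argument that forces every multiple of $4$ into their union. To reach the remaining gaps one would invoke Williamson's method, searching for four symmetric circulant $\{+1,-1\}$ matrices $A,B,C,D$ of order $d$ with $A^2 + B^2 + C^2 + D^2 = 4d \cdot I$ and substituting them into a fixed $4 \times 4$ block pattern, together with its Baumert--Hall and Turyn-type generalizations. Each such scheme fills finitely many additional orders but requires solving a combinatorial existence problem, the four-squares matrix equation, which itself has no proof of solvability for all $d$.

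I expect this final step to be the genuine obstruction. Absent either a uniform construction parameterized by $d$ or a non-constructive existence principle that guarantees a solution to the Williamson-type equations for every order, the smallest unresolved cases can only be dispatched one at a time by exhaustive computer search. Consequently, while the approach above settles an asymptotically dense set of orders and connects directly to the doubly regular tournaments of this paper, a complete proof of the conjecture exactly as worded remains out of reach, which is why it is stated here as a conjecture.
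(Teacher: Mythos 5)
The statement you were asked to prove is the Hadamard conjecture, which the paper itself presents as a long-standing open problem and makes no attempt to prove; your proposal correctly refuses to claim a proof, and that is the right verdict, so there is no gap to report. Your account of the known partial machinery is accurate: the Kronecker-product closure $(H_m \otimes H_n)^\top (H_m \otimes H_n) = mn\,I$, the Paley constructions from the Jacobsthal matrix with $QQ^\top = qI - J$ and $QJ = 0$ (skew-Hadamard of order $q+1$ when $q \equiv 3 \pmod 4$, order $2(q+1)$ when $q \equiv 1 \pmod 4$), and the Williamson condition $A^2+B^2+C^2+D^2 = 4d\,I$, as is your diagnosis that these reach only orders of special arithmetic form with no sieve argument covering all multiples of $4$. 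You also correctly note the relevance to this paper: via the bijection between skew-Hadamard matrices and doubly regular tournaments, the paper's Conjecture \ref{conj:drt} would in fact imply the \emph{stronger} skew-Hadamard version of the conjecture, which is exactly why the paper states rather than proves it.
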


If Conjecture \ref{conj:drt} were true, then it would imply the existence of $4d-1$-doubly regular tournament for every $d$ and thus would imply the Hadamard conjecture is true. In fact, Conjecture \ref{conj:drt} being true would say more which we state below:

\begin{conjecture} There exists a skew symmetric Hadamard matrix of order $4d$ for every $d > 0$.
\end{conjecture}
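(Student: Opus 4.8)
The plan is to establish this conjecture conditionally, as a consequence of Conjecture \ref{conj:drt} together with the known bijection between doubly regular tournaments and skew Hadamard matrices (\cite{hadamard_drt}); since the statement is precisely the skew Hadamard conjecture, an unconditional proof is out of reach, and the achievable target is the implication that Conjecture \ref{conj:drt} implies it. The engine of the argument is that declaring a specific tournament to be a forbidden configuration is not a vacuous assertion: by the definition in Section \ref{sec:prelim}, a forbidden configuration $\bT$ for a class $\cT$ is an actual tournament that $\cT$ forbids while forbidding none of its proper sub-tournaments. Hence calling the $4d-1$-doubly-regular-cone a forbidden configuration presupposes that this tournament, and in particular the $4d-1$-doubly-regular tournament sitting at its base, genuinely exists.

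Concretely, I would fix $d > 0$ and assume Conjecture \ref{conj:drt} at this $d$, so that the $4d-1$-doubly-regular-cone is a forbidden configuration for the rank $2d$ tournament class. First I would invoke the non-vacuity observation above to conclude that a $4d-1$-doubly-regular tournament exists: deleting the apex $i^*$ of the cone leaves exactly such a tournament as the induced sub-tournament $\bT(\bT^+_{i^*})$. Next I would apply the bijection of \cite{hadamard_drt}, which sends a doubly regular tournament on $4d-1$ nodes to a skew Hadamard matrix of order $4d$; feeding in the tournament just produced yields a skew Hadamard matrix of order $4d$. Finally, letting $d$ range over all positive integers, the parameter $4d-1$ sweeps out every residue $\equiv 3 \pmod 4$ and the orders $4d$ sweep out every positive multiple of $4$, so the existence conclusion holds for all orders $4d$, which is exactly the conjecture. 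Because the bijection produces a \emph{skew} Hadamard matrix, this is strictly more than the plain Hadamard conjecture would demand.

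The main obstacle is that the whole chain rests on Conjecture \ref{conj:drt}, which is itself unresolved and, as the surrounding discussion notes, is at least as hard as the classical Hadamard conjecture, so no unconditional argument is possible. Within the conditional derivation the single delicate point is non-vacuity: I must read Conjecture \ref{conj:drt} in the strong sense that the $4d-1$-doubly-regular-cone genuinely \emph{is} a forbidden configuration, matching the usage in Theorem \ref{thm:r4} and the rank $2$ characterization, rather than the weak sense that the class merely forbids a possibly non-existent pattern, since forbidding an object that does not exist is vacuously true and yields no existence conclusion. Once this reading is fixed, every remaining step is an immediate consequence of the cited bijection and needs no further calculation.
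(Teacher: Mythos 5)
Your conditional derivation matches the paper exactly: the statement is an open conjecture with no proof in the paper, which presents it only as a consequence of Conjecture \ref{conj:drt} read in the strong, non-vacuous sense (precisely the ``if they exist'' qualifier the paper itself flags), combined with the bijection of \cite{hadamard_drt} between doubly regular tournaments on $4d-1$ nodes and skew Hadamard matrices of order $4d$. Your reconstruction of that implication, including the identification of non-vacuity as the one delicate point, is the same argument the paper sketches in the surrounding discussion.
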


The main result of this section is a weaker form of the conjecture:
\begin{theorem}
\label{thm:drt}
Rank $2(d-1)$ tournament class forbids $12d^2-1$-doubly-regular-cones if they exist. 
\end{theorem}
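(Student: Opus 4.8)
The plan is to apply Lemma~\ref{lem:pos}: it suffices to show that every representation $\bH=\{\bh_1,\dots,\bh_N\}\subset\cR^{r}$ realizing a doubly-regular tournament $\bR$ on $N:=12d^2-1$ nodes (writing $r:=2(d-1)$ for the dimension, to avoid clashing with the ``$d$'' of the statement) must positively span $\cR^{r}$. I would argue by contradiction. If some representation fails to positively span, then by the Farkas reasoning used in the proof of Lemma~\ref{lem:pos} there is a $\bw$ with $\bw^{T}\bh_i>0$ for all $i$, and since $\bA^{\rot}$ is invertible one can adjoin a cone vector $\bh_0$ with $\bh_0^{T}\bA^{\rot}\bh_i>0$ for all $i$. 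The augmented set $\{\bh_0,\dots,\bh_N\}\subset\cR^{r}$ then realizes the $\bR$-cone, and so produces a rank-$\le r$ skew-symmetric matrix whose off-diagonal signs agree with the $\pm1$ tournament matrix $\mathbf{S}'$ of the $\bR$-cone. In particular the sign-rank of $\mathbf{S}'$ is at most $r$.

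The second step is a spectral computation that converts the doubly-regular hypothesis into an exact spectral-norm bound. Writing the $N\times N$ $\pm1$ tournament matrix of $\bR$ as $\mathbf{S}$, the doubly-regular identity gives $\mathbf{S}^{2}=\mathbf{J}-N\mathbf{I}$ together with $\mathbf{S}\mathbf{1}=\mathbf{0}$ (the row sums of $\mathbf{S}$ vanish). Blocking $\mathbf{S}'=\left(\begin{smallmatrix}0&\mathbf{1}^{T}\\-\mathbf{1}&\mathbf{S}\end{smallmatrix}\right)$ and multiplying, the two off-diagonal blocks vanish because $\mathbf{S}\mathbf{1}=\mathbf{0}$, while the bottom-right block collapses to $-\mathbf{1}\mathbf{1}^{T}+\mathbf{S}^{2}=-\mathbf{J}+(\mathbf{J}-N\mathbf{I})=-N\mathbf{I}$. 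Hence $\mathbf{S}'^{2}=-N\mathbf{I}_{N+1}$, so $\mathbf{S}'\mathbf{S}'^{T}=N\mathbf{I}_{N+1}$, every singular value of $\mathbf{S}'$ equals $\sqrt{N}$, and $\|\mathbf{S}'\|_2=\sqrt{N}$.

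I would then close the argument using Forster's dimension-complexity bound \cite{forster}: for an $(N+1)\times(N+1)$ sign matrix one has $\mathrm{signrank}(\mathbf{S}')\ge (N+1)/\|\mathbf{S}'\|_2=(N+1)/\sqrt{N}$. Since $N+1=12d^2$ and $\sqrt{N}<\sqrt{12}\,d=2\sqrt{3}\,d$, this forces $\mathrm{signrank}(\mathbf{S}')>2\sqrt{3}\,d>2(d-1)=r$, contradicting $\mathrm{signrank}(\mathbf{S}')\le r$. Therefore no representation of $\bR$ can fail to positively span, and Lemma~\ref{lem:pos} yields that the rank-$r$ class forbids the $\bR$-cone. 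The choice $N=12d^2-1$ is convenient here because $12d^2-1\equiv 3\pmod 4$, the congruence necessary for a doubly-regular tournament to exist, and because it leaves comfortable slack in the final inequality.

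The main obstacle is the careful invocation of Forster's theorem in the present skew-symmetric, zero-diagonal setting: one must check that a rank-$r$ skew realization genuinely certifies $\mathrm{signrank}(\mathbf{S}')\le r$ (the diagonal entries are structurally zero rather than signed, so the diagonal constraints are vacuous and can simply be dropped), and that Forster's inequality applies to the zero-diagonal matrix $\mathbf{S}'$ with its spectral norm. By contrast, the identity $\mathbf{S}'^{2}=-N\mathbf{I}$ is a direct consequence of the doubly-regular structure and needs no new idea. I would also flag that the slack in the last inequality shows the constant is not tight; sharpening the Forster step, or replacing it by a more delicate one-sided/positive-spanning argument tailored to doubly-regular tournaments, is precisely what would be needed to push $12d^2-1$ down toward the conjectured $4d-1$.
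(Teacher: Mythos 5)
Your high-level strategy is the same as the paper's: lower-bound the dimension needed for the doubly-regular cone via Forster's spectral bound, using the fact that all singular values of the cone's signed matrix equal $\sqrt{N}$ (your identity $\mathbf{S}'^{2}=-N\mathbf{I}$ is exactly the paper's claim that $\rho_i(\bT)^2=n-1$ for all $i$, and that computation is correct). The genuine gap is the step you yourself flag as ``the main obstacle'' and then wave away: $\mathbf{S}'$ is \emph{not} a sign matrix --- its diagonal is zero --- and Forster's theorem, as stated and proved in \cite{forster}, requires every entry to be nonzero (its proof hinges on $M_{ij}\langle x_i,y_j\rangle=|\langle x_i,y_j\rangle|$ holding for \emph{all} entries). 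Saying the diagonal constraints ``are vacuous and can simply be dropped'' is an assertion, not a proof: a rank-$r$ skew realization of the cone has identically zero diagonal, so it does not certify sign-rank at most $r$ for any genuine $\pm1$ completion of $\mathbf{S}'$, and the zero-diagonal matrix itself is outside the scope of the theorem you cite. Closing exactly this hole is what the paper's proof spends almost all of its effort on: it completes the diagonal to a true sign matrix $\bG=\bT+\texttt{diag}(\mathbf{b})$ with $\mathbf{b}\in\{\pm1\}^n$, bounds the resulting spectral damage via Gershgorin's theorem and the row-orthogonality of the doubly regular structure ($\rho_i(\bG)^2\le\rho_i(\bT)^2+2n-2$, which is the origin of the factor $3$ and hence of the constant $12d^2$), and then uses a perturbation $(\bH+E)^{T}\bA^{\rot}\bH$ to show that a $d$-dimensional skew representation of the zero-diagonal $\bT$ yields a $d$-dimensional realization of some completed sign matrix $\bG$, to which Forster genuinely applies. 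Your route could likely be repaired either by this completion-plus-perturbation argument or by reworking Forster's proof internals in the skew-symmetric setting (where diagonal inner products are automatically zero and are preserved by Forster's balancing transformation), but either repair is real work, not a remark --- and it is precisely this loss that separates the theorem's $12d^2-1$ from the conjectured $4d-1$.

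A secondary flaw: your Farkas step is false as stated. If $\bH$ fails to positively span $\cR^{r}$, you only obtain $\bw\neq 0$ with $\bw^{T}\bh_i\ge 0$ for all $i$, not strict inequality; for instance, vectors lying in a proper subspace which they positively span admit no strict witness $\bw$, yet they do not positively span $\cR^{r}$. The detour is also unnecessary: ``the cone is not forbidden'' means some rank-$r$ tournament contains the cone as a sub-tournament, and restricting that tournament's representation to those nodes directly gives the rank-$\le r$ realization of the cone that your contradiction needs, with no appeal to Lemma \ref{lem:pos} or to positive spanning at all.
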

\begin{proof}
The result follows the arguments in \cite{forster}. We note that the arguments in \cite{forster} work only for non-zero sign matrices. By overloading $\bT$ to denote the signed adjacency matrix of the corresponding tournament, we can consider the non-zero sign matrix $\bG = \bT + \texttt{diag}(\textbf{b})$ where $\textbf{b}  \in \{1,-1\}^{n}$. By Gershgorin's circle theorem and exploting the fact that any two rows of a doubly regular tournament are orthogonal, one can show that  $\rho_i(\bG)^2 \le \rho_i(\bT)^2+ 2n -2$ where $\rho_i$ denotes the $i$-th largest singular value of the corresponding matrix. It then follows from \cite{forster} that if $\bG$ has an representation in $\texttt{dim}$ dimensions, then it must be the case that
$$\texttt{dim}\sum_{i=1}^{\texttt{dim}}(\rho_i(\bG))^2 \ge n^2.$$
Noting that for a doubly regular cone $\bT$ on $n$ nodes, $\rho_i(\bT).^2 = n-1 ~~\forall i$, we get
\begin{equation}
\label{eq:drt}
    \texttt{dim}^2\cdot3(n-1) \ge n^2 => \texttt{dim} \ge \sqrt{\frac{n}{3}}
\end{equation}

Thus to get a matrix $\bM$ such that $sign(\bM) = \bG$, one needs at least $\sqrt{\frac{n}{3}}$ dimensions i.e., $rank(\bM) \ge \sqrt{\frac{n}{3}}$.

Now we show that this also is a lower bound for representing $\bT$. To see this, assume for the sake of contradiction that  $\bT$ has a representation in $d$ dimension where $d < \sqrt{\frac{n}{3}}$. Then, there exists $\bH \in \cR^{d \times n}$ such that $\bT = \bT[\bH]$. The diagonal entries of $\bT[\bH]$ are zero. Now introduce a small enough perturbation $E$ to $\bH$ and consider the matrix $(\bH+E)^T\bA^{\rot}\bH$. The entries of the perturbation matrix $E \in \cR^{d \times n}$ can be chosen to be small enough such that the sign of the  off-diagonal entries of $\bH^{T}\bA^{\rot}\bH$ is same as that of $(\bH+E)^T\bA^{\rot}\bH$. However, the diagonal entries of $(\bH+E)^T\bA^{\rot}\bH$ can get an arbitrary sign pattern, say $\textbf{b}$. Let $\bG$ be the sign pattern matrix corresponding to $(\bH+E)^{T}\bA^{\rot}\bH$. By definition, $\bG$ has a representation in dimension $d < \sqrt{\frac{n}{3}}$. But this contradicts inequality (\ref{eq:drt}).  Thus, it must be the case that $\bT$ has a representation in dimension at least $\sqrt{\frac{n}{3}}$.


Finally, setting $n=12d^2$ to the number of nodes in the doubly regular cone as given in the Theorem, we  get that at least $2d$ dimensions are needed to embed such a tournament. The result follows.
\end{proof}



\section{How Many Dimensions are needed to Represent A Tournament?}
\label{sec:dualq}
The previous sections considered a specific rank $d$ tournament class and tried to characterize them using forbidden configurations. In this section, 
We turn to the dual question of understanding the minimum number of dimensions needed to embed a tournament. We start by not considering a single tournament $\bT$ but the set of all tournaments on $n$ nodes. We show  below a general result which provides a lower bound on the minimum dimension needed to embed any tournament on $n$ nodes. 

\begin{theorem} \textbf{(Lower Bound on minimum representation dimension)}
Let $\bT$ be a tournament on $n$ nodes. Then there exists $\bH = \{h_1,\ldots,h_n\} \in \cR^d$ such that $\bT = \bT[\bH]$ only if $d\sum_{i=1}^{n}(\rho_i(\bT+I))^2 \ge n^2$. Furthermore, let $d$ be the minimum dimension needed to embed every tournament on $n$ nodes. Then, $ d \ge  \sqrt{n}$. 
\end{theorem}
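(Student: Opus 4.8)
The plan is to establish the ``only if'' condition as an instance of Forster's dimension-complexity bound, and then to obtain the $\sqrt{n}$ lower bound by feeding an extremal ``hard'' tournament into that condition.

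First I would reduce representability of $\bT$ in $d$ dimensions to a statement about the sign matrix $\bT + I$, whose entries are all $\pm 1$ (the diagonal becomes $1$ and the off-diagonal is the skew-symmetric $\pm1$ pattern of $\bT$). Suppose $\bT = \bT[\bH]$ with $\bH = \{\bh_1,\ldots,\bh_n\}\in\cR^d$, so $\mathrm{sign}(\bh_i^{T}\bA^{\rot}\bh_j)=T_{ij}$ for $i\neq j$. Then $\bM = \bH^{T}\bA^{\rot}\bH$ has rank at most $d$ and matches $\bT$ in every off-diagonal sign, but its diagonal vanishes. To promote it to a genuine nonzero sign matrix I would reuse the perturbation device from the proof of Theorem~\ref{thm:drt}: replace the left factor and consider $(\bH+E)^{T}\bA^{\rot}\bH$, whose rank is still at most $d$. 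For $E$ small enough the off-diagonal signs are unchanged (each off-diagonal entry is bounded away from $0$), while the $i$-th diagonal entry becomes $\be_i^{T}\bA^{\rot}\bh_i$, which I can force positive by taking $\be_i \propto \bA^{\rot}\bh_i$ (nonzero, since $\bA^{\rot}$ is invertible and $\bh_i\neq 0$). The result is a rank-$\le d$ realization of the sign pattern of $\bT + I$.

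Next I would invoke Forster's inequality \cite{forster} for this sign matrix $\bG = \bT + I$: any realization of an $n\times n$ nonzero sign matrix in dimension $d$ obeys $d\sum_{i=1}^{d}\rho_i(\bG)^2 \ge n^2$, where $\rho_i$ denotes the $i$-th largest singular value. This is precisely the necessary condition claimed; I would note that the top-$d$ truncation is the meaningful quantity, since $\sum_{i=1}^{n}\rho_i(\bT+I)^2 = \|\bT+I\|_F^2 = n^2$ holds identically for every tournament.

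For the ``furthermore'' part I would exhibit a single tournament on $n$ nodes that saturates this spectral budget. Take $\bT$ to be a coned doubly-regular tournament, i.e.\ one for which $\bT+I$ is a skew-Hadamard matrix of order $n$; these exist whenever a skew-Hadamard matrix of order $n$ does (known for infinitely many $n$, conjectured for all $n\equiv 0 \bmod 4$, tying back to the Hadamard discussion of Section~\ref{sec:rankd}). Then $(\bT+I)(\bT+I)^{T}=nI$, so every singular value of $\bT+I$ equals $\sqrt n$ and $\sum_{i=1}^{d}\rho_i(\bT+I)^2 = dn$; the necessary condition becomes $d\cdot dn \ge n^2$, i.e.\ $d\ge\sqrt n$. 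Since the minimum dimension embedding every tournament must in particular embed this one, the bound follows. (For an ordinary doubly-regular tournament one computes $\bT^2 = J - nI$, whence $(\bT+I)(\bT+I)^{T}=(n+1)I-J$ has nonzero singular values $\sqrt{n+1}$, giving only the slightly weaker $d\ge n/\sqrt{n+1}$; this explains why the balanced coned instance, with a perfectly flat spectrum, is the right one to plug in.) I expect the main obstacle to be the correct application of Forster's bound rather than its re-derivation: one must verify that the perturbation yields a bona fide nonzero sign matrix of the same rank, and that the instance used for the lower bound genuinely has a flat singular spectrum — the short computation $(\bT+I)(\bT+I)^{T}=nI$ (equivalently $\bT^2=-(n-1)I$ on $\mathbf{1}^{\perp}$) being the one calculation I would carry out in full, since it is what pins the constant to exactly $\sqrt n$.
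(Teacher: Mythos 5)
Your proposal is correct and takes essentially the same route as the paper, whose one-line proof simply defers to the first part of the proof of Theorem \ref{thm:drt}: the diagonal-perturbation trick to turn a rank-$d$ realization of $\bT$ into a genuine nonzero sign matrix, Forster's inequality, and a coned doubly-regular (skew-Hadamard) tournament as the extremal instance. Your two refinements---truncating the spectral sum at $d$ (since $\sum_{i=1}^{n}\rho_i(\bT+I)^2 = \|\bT+I\|_F^2 = n^2$ would make the stated condition vacuous) and computing $(\bT+I)(\bT+I)^{T}=nI$ directly to pin the constant to exactly $\sqrt{n}$ rather than invoking a Gershgorin estimate---are faithful to what the paper's argument must mean, so this is the same proof carried out more carefully.
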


\begin{proof}
The proof uses the same ideas as the first part of the proof of Theorem \ref{thm:drt}. 
\end{proof}

The result follows the arguments in the celebrated work of \cite{forster}. As \cite{signrank} point out, Forster's technique cannot be stretched further in obvious ways to get upper bounds. 

The above theorem tells us that in the worst case at least $\mathcal{O}(\sqrt{n})$ dimensions are necessary to represent all tournaments on $n$ nodes. In fact if Conjecture \ref{conj:drt} were true, the minimum dimension would be $\mathcal{O}(n)$. However, in practice one might not encounter tournaments with such extremal/worst-case properties. Typically, a smaller number of dimensions might be enough to represent tournaments of practical interest. Our goal below is to upper bound on the number of dimensions needed to embed a given tournament $\bT$. 

Recall that $E^{\sigma}(\bT)$ denotes the feedback arc set of a permutation $\sigma$ w.r.t a tournament $\bT$. We define the number of nodes involved in the feedback arc set as follows: $ \theta(\sigma,\bT) = \left|\{i: \exists j: \sigma(i) > \sigma(j), (i,j) \in E^\sigma(\bT)\}\right|$.
We next define a crucial quantity $\mu(\bT)$ which we term as the \emph{Flip Feedback Node set size}. This quantity will determine an upper bound on the dimension where a tournament can be represented:

$$ \mu(\bT) = \displaystyle \min_{\sigma}\min_{\bT' \in \cF(\bT)} \theta(\sigma,\bT')$$

In words, given a tournament $\bT$, the quantity $\mu(\bT)$ captures the minimum number of nodes involved in any feedback arc set among all tournaments in the flip class of $\bT$. For instance if $\bT$ is a locally transitive tournament then $\mu(\bT)$ would be $0$ - as $\bT$ is necessarily in the flip class of a transitive tournament and so the $E^\sigma$ corresponding to the topological ordering of the transitive tournament would have an empty feedback arc set. As another example, consider $\bT$ to be the coned $3$-cycle. The flip class of this tournament contains only itself and the best permutation will have one edge in the feedback arc set. Thus $\mu(\bT) = 1$.   In general, it is trivially true that $\mu(\bT)$ is upper bounded by $n$, the number of nodes. However $\mu(\bT)$ could be much smaller than $n$ depending on $\bT$. The main result of this section is the theorem below that shows that $\mu(\bT)$ gives an upper bound on the number of dimension needed to represent any tournament.
\begin{theorem} \textbf{(Upper Bound on minimum representation dimension)}
\label{thm:dim}
For any tournament $\bT$, there exists $\bH = \{ \bh_1,\ldots, \bh_n\} \in \cR^{2(\mu(\bT)+1)}$ such that $\bT = \bT[\bH]$.
\end{theorem}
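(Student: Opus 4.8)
## Proof Proposal

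The plan is to construct an explicit representation of $\bT$ in dimension $2(\mu(\bT)+1)$ by exploiting the flip-class structure together with the characterization of locally transitive tournaments. First I would unpack the definition of $\mu(\bT)$: there exists a tournament $\bT' \in \cF(\bT)$ and a permutation $\sigma$ such that the feedback arc set $E^\sigma(\bT')$ involves only $\mu(\bT)$ distinct nodes (in the sense of $\theta$). By Theorem \ref{thm:forb}'s underlying mechanism, it suffices to represent $\bT'$, since if $\bT' = \phi_S(\bT)$ and we have a representation $\bH'$ of $\bT'$, then negating the vectors $\bh'_i$ for $i \in S$ yields a representation of $\bT$ in the \emph{same} dimension. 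So the problem reduces to representing the particular tournament $\bT'$ that achieves the minimum.

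Next I would decompose $\bT'$ relative to the ordering $\sigma$. Let $F \subseteq [n]$ be the set of at most $\mu(\bT)$ nodes that participate in the feedback arc set $E^\sigma(\bT')$, and let $L = [n] \setminus F$ be the remaining nodes. By construction, once we delete the feedback edges touching $F$, the tournament respects the linear order $\sigma$; in particular, the restriction of $\bT'$ to $L$ is consistent with $\sigma$ and hence is transitive. The key idea is to build the representation dimension-by-dimension: allocate two dimensions to encode the global transitive ``backbone'' given by $\sigma$ (this is the rank-$2$ gadget guaranteed by Theorem \ref{thm:rank2char}, since a transitive tournament is locally transitive and thus rank $2$), and then allocate one additional pair of dimensions for each node in $F$ to ``correct'' the orientation of its incident edges. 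This accounts for $2$ dimensions for the backbone plus $2\mu(\bT)$ dimensions for the corrections, giving $2(\mu(\bT)+1)$ in total.

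Concretely, for each node $f \in F$ I would assign a dedicated coordinate block that is active (nonzero) only for $f$ and arranged via $\bA^{\rot}$ so that the bilinear form $\bh_i^T \bA^{\rot} \bh_j$ receives a large-magnitude contribution of the correct sign whenever $f \in \{i,j\}$ and an edge in $E^\sigma(\bT')$ must be flipped against the backbone orientation. The non-feedback edges, and all edges among $L$, are decided by the two backbone coordinates alone; one chooses the magnitudes so that the correction terms dominate the backbone term precisely on the feedback edges and are inert elsewhere. This is a standard ``dominant perturbation'' construction, and checking $\bT' = \bT[\bH']$ reduces to verifying the sign of $\bh_i^T \bA^{\rot} \bh_j$ for each of the three edge types: edges within $L$ (handled by the backbone), edges incident to $F$ that agree with $\sigma$ (backbone sign preserved because corrections are orthogonal or subdominant), and edges incident to $F$ that disagree with $\sigma$ (correction term flips the sign).

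The main obstacle I anticipate is ensuring the correction blocks for \emph{different} nodes of $F$ do not interfere with one another, and that a single correction block can simultaneously fix \emph{all} of a given node's feedback edges while leaving its non-feedback edges governed by the backbone. Giving each $f \in F$ its own orthogonal pair of coordinates cleanly isolates the blocks from each other, but within one block one must still verify that a single $2$-dimensional vector assignment can realize an arbitrary pattern of ``flips'' on the edges incident to $f$ without corrupting the sign of $f$'s agreeing edges. Resolving this likely requires the local transitivity structure again: the edges incident to a fixed node, split into in- and out-neighborhoods, each form a transitive sub-order under $\sigma$, so the required flip pattern is itself an interval-like pattern that a rank-$2$ (rotation) gadget can produce. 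Making the magnitude bookkeeping rigorous — choosing the scale of each correction large relative to the backbone but the correction blocks mutually non-interfering — is the technical heart of the argument, but it is a finite, constructive verification rather than a conceptual difficulty.
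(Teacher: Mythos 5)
Your overall route is the same as the paper's: reduce to the optimizing $\bT'\in\cF(\bT)$ by negating the vectors indexed by the flip set (this preserves dimension), realize the order $\sigma$ by a rank-$2$ skew-symmetric ``backbone'', and spend one extra pair of coordinates per node of $F$, giving $2(\mu(\bT)+1)$ in total. The gap is in the correction gadget itself. A coordinate block whose entries are nonzero \emph{only} for $f$ contributes nothing to any comparison: the block's contribution to $\bh_i^{T}\bA^{\rot}\bh_j$ is a $2\times2$ skew form evaluated at the block coordinates of $i$ and of $j$, and this vanishes whenever either argument is the zero vector. In particular it vanishes for every pair $(j,f)$ with $j\neq f$, so the gadget as you specify it cannot flip a single edge. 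For $f$'s block to act on the pairs $(j,f)$, the partners $j$ themselves must carry nonzero coordinates in that block.

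The repair is exactly the paper's construction. Let $E_f=\{j:(f,j)\in E^{\sigma}(\bT')\}$ be the $\sigma$-predecessors of $f$ that $f$ beats in $\bT'$. In $f$'s block give every $j\in E_f$ the \emph{same} vector $(\Delta_f,0)$, give $f$ the vector $(0,\Delta_f)$, and give all other nodes zero; equivalently, add to the current skew-symmetric matrix the rank-$2$ correction $\bu\bv^{T}-\bv\bu^{T}$, where $\bu$ equals $\Delta_f$ on $E_f$ and $0$ elsewhere and $\bv$ is supported on $f$ alone. Because the skew form of two parallel vectors vanishes, this block contributes exactly $-\Delta_f^{2}$ to the entries $(j,f)$ with $j\in E_f$ and exactly zero to every other entry; choosing $\Delta_f^{2}$ larger than the backbone values being overturned flips precisely the intended edges. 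Once this is in place, the ``technical heart'' you anticipate dissolves --- and your proposed resolution of it would not have worked anyway: $E_f$ is an arbitrary subset of the $\sigma$-predecessors of $f$, with no interval structure, and no local transitivity of $\bT'$ can be invoked, since if any tournament in $\cF(\bT)$ were locally transitive, Theorem \ref{thm:ltflip} would give $\mu(\bT)=0$ and there would be nothing to correct. Neither ingredient is needed: the corrections are exact rather than ``dominant perturbations'', so no cross-block magnitude bookkeeping arises, and distinct blocks touch disjoint sets of matrix entries (block $f$ touches only pairs $(j,f)$ with $\sigma(j)<\sigma(f)$, and no unordered pair can play this role for two different coned nodes). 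With the gadget fixed, your outline coincides with the paper's proof, including the final minimization over $\sigma$ and over $\cF(\bT)$.
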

\begin{proof}
Given a tournament $\bT$, we first show that we can start with an arbitrary transitive tournament and add enough rank $2$ \emph{corrections} to obtain a representation for $\bT$. Every addition of a rank $2$ matrix will increase the representation dimension by at most $2$. The result will follow then by noting that for the choice of the transitive tournament which minimizes the number of corrections needed, one needs at most $2(\mu(\bT)+1)$ representation dimension. 

\begin{figure}[ht]
\centering
 \includegraphics[scale=0.25]{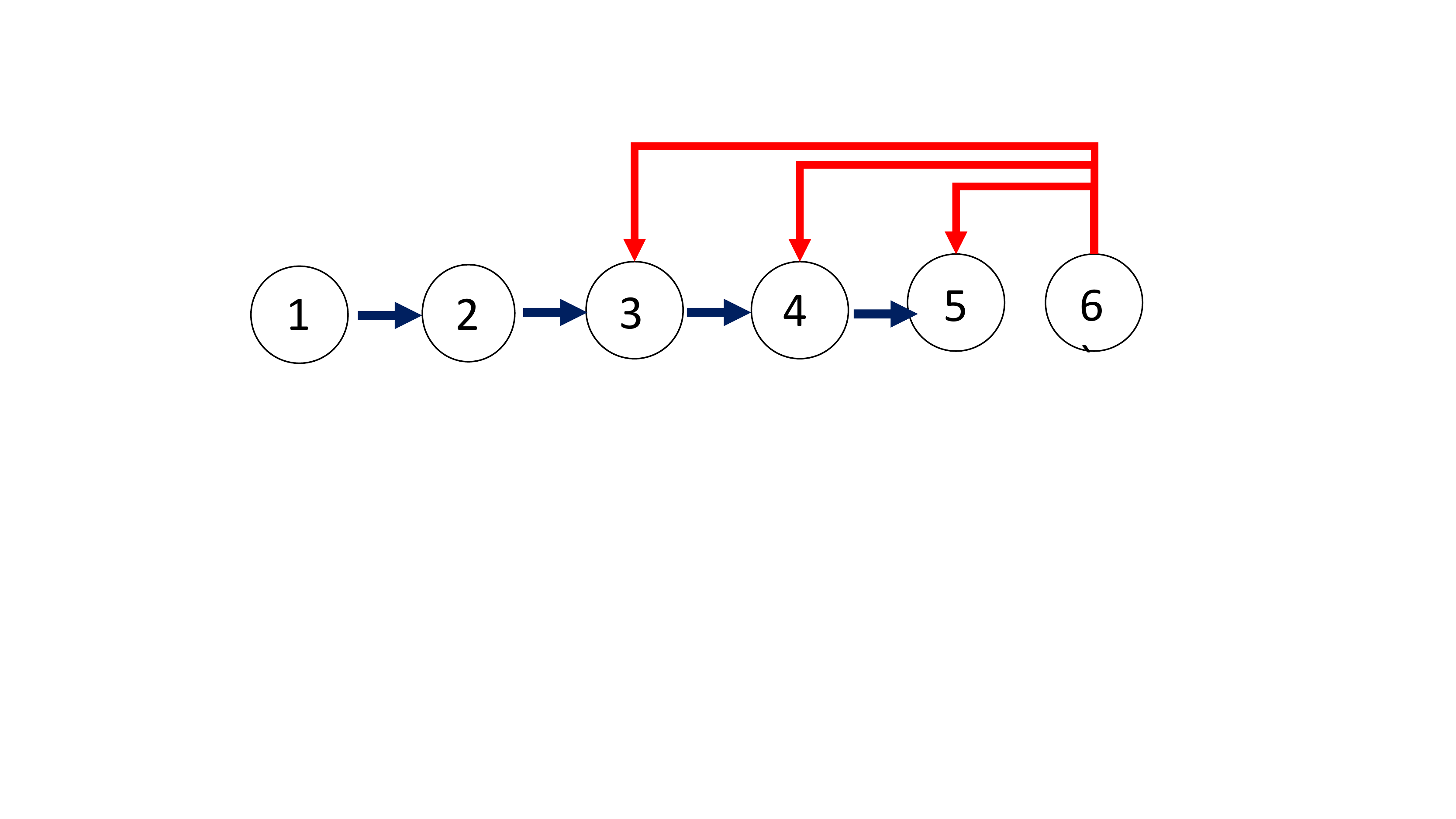}
 \caption{A simple tournament to illustrate that the quantity $\mu(\bT)$ need not be same as the size of the minimum feedback arc set. All edges go from left to right except the ones  in red. See Remark $2$ in Section \ref{sec:dualq} for details.}
 \label{fig:muT}
\end{figure}

Let $\bT'$ be an arbitrary transitive tournament which has a $2$ dimensional representation and let $\bM \in \cR^{n \times n}$ be the associated skew symmetric matrix that represents $\bT'$. W.l.o.g, assume that $M_{ij} > 0$ if and only if $i < j$. Define $E_k(\bT) = \{ i: i<k,~ i <_{\bT} k\}$. By definition, the feedback arc set $ \displaystyle E^{\sigma}(\bT) = \cup_{k=1}^{n} (k,E_k(\bT))$ where $\sigma = [1,\ldots,n]$ is the topological order corresponding to the transitive tournament $\bT'$. We start with $k=n$ and \emph{correct} the feedback arc errors arising from $E_k$ iteratively. Define $\Delta_n =   \sqrt{\displaystyle \min_{i < n} (M_{in} + \epsilon)}$ for some small enough $\epsilon > 0$. Let $\bu \in \cR^n$ be such that $u_i = \Delta_n~ \forall i \in E_n$ and $u_i = 0$ otherwise. Let $\bv \in \cR^n$ be such that $v_n = -\Delta_n, v_i = 0 ~\forall i \neq n$. It is easy to verify that $\bM + \bu\bv^{T} - \bv\bu^{T}$ represents a tournament that has the feedback arc set $\cup_{k=1}^{n-1} (k,E_k(\bT))$ i.e., the errors in $E_n$ have been corrected. The cost of correcting the error is adding a rank $2$ skew symmetric matrix which increases the representation dimension by at most $2$. One can repeat the same procedure for $n-1,n-2,\ldots$ until all errors are corrected. 

The upper bound in the theorem follows noting that the above argument works for any transitive tournament $\bT'$ and so we can start with the one which has the least number of nodes involved in the feedback arc set to minimize the number of extra dimensions needed to represent $\bT$. 
\end{proof}


\textbf{Remark 1:}
The above theorem says that one can always obtain an representation $\bH$ in dimension $d = \mathcal{O}(\mu(\bT))$ that realizes $\bT$. The bound gets tighter for tournaments with smaller feedback arc sets, which is what one might typically expect in practice. Note that even for some tournaments that may have a large feedback arc set, the associated flip class might contain a tournament with a smaller feedback arc set.

\textbf{Remark 2:} We note that in general $\mu(\bT)$ is \emph{not necessarily} the cardinality of the minimum feedback arc set among all tournaments in the flip class of $\bT$. Instead $\mu(\bT)$ captures the cardinality of the set of nodes involved in any feedback arc set. To see why these two could be different, consider the tournament in Figure \ref{fig:muT}. Here, $\sigma_a = [6 ~1~ 2~ 3~ 4 ~5]$ is the permutation minimizing the feedback arc set. Let $\sigma_b = [1~2~ 3 ~4 ~5~6]$. Note that $|E^{\sigma_a}| = 2$, $\theta(\sigma_a,\bT) = |\{1,2\}| =  2$. However $|E^{\sigma_b}| = 3$, yet $\theta(\sigma_b,\bT) = |\{6\}| = 1$. Thus, $\sigma_b$ gives a tighter upper bound on the dimension needed to represent $\bT$.


\subsection{Connections to Sign Rank}
\label{sec:signrank}
The sign rank of a matrix $\bG \in \{+1,-1\}^{m\times n}$ is defined as the  smallest integer $d$ such that there exists a matrix $M \in \cR^{m \times n}$ of rank $d$ that satisfies $sign(M) = \bG$ \footnote{Sign-rank can be defined for any general matrix $G \in \cR^{m \times n}$. We restrict to the sign matrices to make the connections to the tournament matrices explicit}. Here $sign(z) = 1$ if $z > 0$ and $-1$ otherwise. A breakthrough result on the lower bound on the sign rank was given by \cite{forster}. However, good upper bounds have been harder to obtain. We show below how Theorem \ref{thm:dim} also translates to an upper bound on the sign rank of any sign matrix $G$.

\begin{theorem}
Let $\bG \in \{+1,-1\}^{m\times n}$ be an arbitrary sign matrix. Let $\bT$ be any tournament on $m+n$ nodes. Let $S = \{1,\ldots,m\}$ and let edge orientations of the cut($S,\bar{S}$) in $\bT$ be determined by $\bG$.
Then sign-rank($\bG$) $\le 2(\mu(\bT)+1)$
\end{theorem}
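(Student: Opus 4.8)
The plan is to reduce the sign-rank bound to a single application of Theorem \ref{thm:dim}, by reading off the appropriate off-diagonal block of the skew-symmetric matrix that represents $\bT$. First I would apply Theorem \ref{thm:dim} to the tournament $\bT$ on $m+n$ nodes: it guarantees a representation $\bH = \{\bh_1,\ldots,\bh_{m+n}\} \in \cR^{2(\mu(\bT)+1)}$ with $\bT = \bT[\bH]$. Writing $\bH = [\bH_S \mid \bH_{\bar S}]$, where $\bH_S \in \cR^{d \times m}$ collects the columns indexed by $S$ and $\bH_{\bar S} \in \cR^{d \times n}$ the columns indexed by $\bar S$ (with $d = 2(\mu(\bT)+1)$), the induced skew-symmetric matrix $\bM = \bH^{T}\bA^{\rot}\bH$ has rank at most $d$ and a natural $2\times 2$ block form. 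The block of interest is $M' := \bM_{S\bar S} = \bH_S^{T}\bA^{\rot}\bH_{\bar S} \in \cR^{m \times n}$.

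The second step is to verify the two properties of $M'$ that yield the bound. For the sign pattern: by construction of $\bT$, the orientation of the edge across the cut from node $i \in S$ to the $j$-th node of $\bar S$ (i.e.\ node $m+j$) is dictated by $G_{ij}$, and since $\bT = \bT[\bH]$ this orientation is exactly $\mathrm{sign}(\bh_i^{T}\bA^{\rot}\bh_{m+j}) = \mathrm{sign}(M'_{ij})$. Hence $\mathrm{sign}(M') = \bG$. For the rank: $M'$ is a product of an $m \times d$ matrix, the $d \times d$ matrix $\bA^{\rot}$, and a $d \times n$ matrix, so $\mathrm{rank}(M') \le d = 2(\mu(\bT)+1)$; equivalently, $M'$ is a submatrix of $\bM$, whose rank is at most $d$. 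Since $M'$ realizes the sign pattern $\bG$ with rank at most $2(\mu(\bT)+1)$, the definition of sign-rank gives $\text{sign-rank}(\bG) \le 2(\mu(\bT)+1)$.

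There is essentially no hard analytic step here; the whole argument is a bookkeeping reduction, and the genuine mathematical content is already carried by Theorem \ref{thm:dim}. The only points that need care are (i) matching the edge-orientation convention $i \succ_{\bT} j \iff M_{ij} > 0$ to the sign convention of $\bG$ consistently across the cut, so that $\bM_{S\bar S}$ realizes $\bG$ and not $-\bG$ or its transpose (any such discrepancy costs only a relabeling, since sign-rank is invariant under row/column sign flips and transposition), and (ii) noting that $\mu(\bT)$ is well-defined for the constructed $\bT$, which it is, since $\mu$ is defined for every tournament. The conceptual novelty to emphasize is the observation that an off-diagonal block of a tournament representation is precisely a low-rank realization of a sign matrix; this is exactly what converts the graph-theoretic parameter $\mu(\bT)$ into an upper bound on sign-rank.
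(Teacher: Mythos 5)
Your proposal is correct and takes essentially the same route as the paper: apply Theorem \ref{thm:dim} to $\bT$, then note that any representation realizing $\bT$ automatically realizes $\bG$ on the cut. The paper compresses this into one sentence (``by construction any representation of $\bT$ must also represent $\bG$''), and your extraction of the block $\bH_S^{T}\bA^{\rot}\bH_{\bar{S}}$, with its sign pattern equal to $\bG$ and rank at most $2(\mu(\bT)+1)$, is just the explicit bookkeeping behind that sentence.
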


\begin{proof}
From Theorem \ref{thm:dim}, $\bT$ can be represented using at most $2(\mu(\bT)+1)$ dimensions. By construction any representation of $\bT$ must also represent $\bG$. The result follows.
\end{proof}

As far as we know, the previous known upper bounds for sign rank depended either on the VC-dimension of the natural binary function class associated with $\bG$ or assumed extra regularity conditions (for instance see \cite{signrank} for upper bounds of $\Delta$ regular sign pattern matrices). Our bound reduces the study of sign rank to a more graph theoretic study of the feedback arc set problem. It is not clear if the upper bound of $2(\mu(\bT) + 1)$ can be improved and we leave this to future work.

\subsection{Real World Tournament Experiments}
We conducted simple experiments on real world data sets. Specifically, we considered $114$ real world tournaments that arise in several applications including election candidate preferences, Sushi preferences, cars preferences, etc (source: \texttt{www.preflib.org}). The number of nodes in these  tournaments varied from $5$ to $23$. Out of the $114$ tournaments considered, $76 (66.67\%)$ were in fact locally transitive. For these tournaments, the upper bounds and lower bounds given by our theorems matched and was equal to $2$. Interestingly, even for the non-locally transitive tournaments, the lower bound still turned out to be $2$. We computed the upper bound for tournaments  of size at most $9$ (we did not do it for larger tournaments as this involves a brute force search) and found the value to be either $4$ or $6$. This shows that the upper bounds are usually non-trivial and efficiently approximating it is an interesting direction for future work.
\section{Conclusion}
\label{sec:conc}
\vspace{-2pt}
In this work, we develop a theory of tournament representations. We show how fixing the representation dimension enforces,  via forbidden configurations, restrictions on the type of tournaments that can be represented. We study and characterize rank $2$ tournaments and show forbidden sub-tournaments for the rank $d$ tournament class. We develop upper and lower bounds for minimum dimension needed to represent a tournament. Future work includes attempting to look deeper into some of the conjectures presented and possible strengthening of some of the bounds presented. 
 
\bibliography{template}
\bibliographystyle{plain}
\newpage
\appendix
\section{Appendix}
\textbf{Proof of Lemma \ref{thm:lem3nodes}}
\begin{proof}
Wlog, assume that $1 \ge_{\bT} 2 \ge_{\bT} 3 \ge_{\bT} 1$. Then, it must be the case that the counterclockwise angle between the representation of the corresponding items must be $\le 180$ degrees. However, if the representations $\{\bh_1,\bh_2,\bh3\}$ did not positively span $\bR^2$, then by Farka's Lemma, there must be some supporting hyperplane for the representations. However, this would imply at least one of the node pairs $\{(1,2), (2,3), (3,1)\}$ must necessarily make an angle $\ge 180$ degrees. But this contradicts the assumption that the nodes form a $3$ cycle. 
\end{proof}

\textbf{Proof of Theorem \ref{thm:ltflip}}
\begin{proof}
Let $\bT$ be a transitive tournament on $n$ nodes and let $\bT' \in \cF(\bT)$. Then, there exists some $S \subseteq [n]$ such that $\bT' = \phi_{S}(\bT)$. Consider any node $i \in [n]$. Define following $4$ subset of nodes associated with $i$: $S_1 = \bT^+_i \cap S, S_2 = \bT^+_i \backslash S_1, S_3 = \bT^-_i \cap S, S_4 = \bT^-_i \backslash S$. Note that each of $\bT(S_k)$ for $k=1$ to $4$ is a transitive sub-tournament and the relationship across $S_i,S_j$ for any two sets is either one completely beats the other or completely loses to the other. Note that in $\phi_S(\bT)$  the orientation of the edges across these sets is either flipped as a whole or not flipped at all.  Thus, exactly two of these sets will be part of $\bT'^+_i$ and two part of $\bT^-_i$ (the exact sets among these sets will depend on whether $i \in S$ or not), thus preserving the local transitivity property. Thus $\bT'$ is locally transitive.  

To prove the opposite direction, let $\bT$ be a locally transitive tournament. Let $i \in [n]$ be an arbitrary node. We argue that $\bT' := \phi_{\bT^+_i}(\bT)$ is a transitive tournament. We will show this by arguing that there does not exist a $3$ cycle in $\bT'$. Consider any $3$-cycle $a >_{\bT} b >_{\bT} c >_{\bT} a$. As $\bT$ is locally transitive, not all $\{a,b,c\}$ can be in $\bT^+_i$. Also not all $\{a,b,c\}$ can avoid $\bT^+_i$ as $[n] \setminus \bT^+_i$ is transitive. Thus at least one and at most two of $\{a,b,c\}$ belongs to $\bT^+_i(\bT)$. This means that the $3$-cycle becomes a transitive tournament in $\bT' := \phi_{\bT^+_i}(\bT)$. Thus every cycle in $\bT$ becomes transitive in $\bT'$. Now consider any $3$ nodes which forms a transitive tournament $a >_{\bT} b >_{\bT} c$ and involves at least one node and at most two nodes in $\bT^+_i$. Then there are only two cases to consider: (1) $a \in [n] \setminus \bT^+_i$ and $\{b,c\} \in \bT^+_i$ or (2) $\{a,b\} \in [n] \setminus \bT^+_i$ and $c \in \bT^+_i$. In both these cases, it is easy to verify that the the corresponding tournament in $\bT'$ is either the transitive tournament $b >_{\bT'}> c >_{\bT'} a $ or the transitive tournament $ c >_{\bT'} a >_{\bT'} b$. As these are the only possibilities, the result follows by noting that $\bT' = \phi_{\bT^+_i}(\bT) \implies \bT = \phi_{\bT^+_i}(\bT')$ and so $\bT \in \cF(\bT')$. 
\end{proof}

\textbf{Proof of Theorem \ref{thm:rank2char}}
\begin{proof}
Assume $\bT$ is locally transitive. Then by Theorem \ref{thm:ltflip}, it must be in the flip class of some transitive tournament $\bT'$ i.e. $\bT = \phi_S(\bT')$ for some $S \subseteq [n]$. It is easy to represent a transitive tournament using a rank $2$ skew symmetric matrix. Indeed pick any vector $\bu \in \cR^n$ which is sorted according to the topological ordering of $\bT'$. Let $\bv \in \cR^n$ be the all ones vector. Then $\bM = \bu\bv^{T} - \bv\bu^{T}$ represents $\bT'$. Let $\bM = (\bH')^{T}A^{rot}\bH$ for some $\bH' \in \cR^{2\times n}.$ Then the columns of $\bH'$ represent $\bT'$. Now consider the representation  $\bH'$ obtained from $\bH'$ where the columns indexed by $S$ are multiplied by $-1$. This does not change the rank of $\bH'$ and it can be verified that $\bT = \bT[\bH]$.

To prove the other direction, consider any rank $2$ skew symmetric matrix $\bM \in \cR^{n \times n}$. Then there must exist $\bu,\bv \in \cR^n$ such that $\bM = \bu\bv^{T} - \bv\bu^{T}$. Consider any node $i \in [n]$. Consider three nodes $a,b,c$ such that $u_iv_j > u_jv_i$ for $j \in \{a,b,c\}$. Furthermore let $u_av_b > v_au_b$ and $u_bv_c > v_bu_c$. Then by carefully going over all $\{\pm\}$ sign possibilities for $\{u_i,u_a,u_b,u_c,v_i,v_a,v_b,v_c\}$, one can conclude that it must be the case that $u_av_c > v_au_c$. This just shows that $\bT^+_i$ is transitive. Analogously one can show that $\bT^-_i$ is also transitive. As $i$ was arbitrary, the result follows.
\end{proof}

\textbf{Proof of Theorem \ref{thm:mfast}}
\begin{proof}
Recall that the classic quick sort algorithm picks a pivot node (say $1$) and places all nodes that \emph{beat} the pivot to the right in the ranking and those that \emph{lose} to the left and then recurses on the left and right subsets. As $\bT$ is locally transitive, choosing any pivot $i$ would correspond to fixing the pivots position and simply returning the ranking $[\sigma(N^-_i),i,\sigma(N^+_i)]$ where $\sigma(N^+_i)$, $\sigma(N^-_i)$ correspond to the topological ordering of the transitive tournaments $\bT(N^+_i)$ and $\bT(N^-_i)$ respectively. 
We first argue that changing the pivot only cyclically shifts the final ranking. To see why this is true, consider two pivots $i$ and $j$ and their corresponding rankings $\sigma^i$ and $\sigma^j$. Without loss of generality, assume that the ranking $\sigma^i = [1,\ldots,n]$ and  $i >_{\bT} j$. We argue that there exists an integer $k$ such that $N^+_j = \{j+1 \mod n, (j+2) \mod n, \ldots, (j+ k) \mod n\}$ (where by convention $n \mod n = n$). As $N^+_i$ is transitive, and $j$ is part of it, it must be the case that all the nodes $\{j+1,\ldots,j+n\} \in N^+_j$. Then to prove the claim, it remains to be shown that the set $N^-_i \cap N^+_j$ is either empty or must be the nodes $\{1,\ldots,\ell\}$ for some $\ell < i$. If empty, we are done. If not, assume for the sake of contradiction that there exists three succesive integers $\ell_a, \ell_b, \ell_c < i$ such that $\ell_a, \ell_c <_{\bT} j$ but $\ell_b >_{\bT} j$. It is easy to verify that this cannot happen as it would lead to $\bT(\{\ell_a,\ell_b,\ell_c,j\})$ being a forbidden configuration.  

Notice that as every locally transitive tournament is in the flip class of a transitive tournament i.e., $\bT = \phi_S(\bT)$ for some $S$, one can divide the set of all nodes into $2k+1$ groups as follows: Let $[1,\ldots,n]$ be the ordering corresponding to the transitive tournament wlog. Starting from $1$, add as many nodes to a group such that all the elements belong to either $S$ or $\bar{S}$. Once the condition is violated, create a new group and continue the same process. It is not hard to verify that $2k+1$ groups will be formed in this process for some $k \ge 0$. Moreover, each group would \emph{separate} two other groups by construction.

 We can first show that the items of a single group must appear in consecutive positions in one of the optimal rankings. This is proven as follows.

Consider there exists an optimal ranking with items which belong to the same group not occurring consecutively. Consider two items belonging to the same group, which have items from other groups present in between them in the ranking. Consider these items to be $a_1,a_2$, with $a_1$ present above in the rankings. Consider the number of upsets that the two items are involved in to be $u_1$ and $u_2$. If $u_1 \leq u_2$, $a_2$ can be placed right after $a_1$ in the ranking, creating a better or equivalent ranking in terms of upsets. Similarly if $u_1 \geq u_2$, $a_1$ can be placed directly above $a_2$ in the rankings to create an equivalent or better ranking. Therefore there exists an optimal ranking which has all items in the same group consecutively.

This theorem is then reduced to finding a ranking of groups, which is proven using induction on $k$. 

\textbf{Base Case} \\
Consider the base case with $k=1$. Let there be 3 groups, $C$, $A_1$, $B_1$. We can say that the optimal ranking cannot be any of the following
$$ C B_1 A_1 $$
$$ A_1 C B_1 $$
$$ B_1 A_1 C $$
since all three rankings can be made better by swapping the second and third ranked groups. Therefore the 3 possible optimal rankings are
$$ C A_1 B_1 $$
$$ A_1 B_1 C $$
$$ B_1 C A_1 $$
which are cyclic shifts of each other.

\textbf{Inductive Step} \\
One property of rankings which is useful for the inductive step proof is as follows. Let there be $2k+1$ groups $G = \{g_1, g_2 \dots g_{2k+1}\}$. Label the optimal ranking with the condition that $g_i$ be placed first in the ranking as $R_i$. The ranking $R_i$ with $g_i$ removed must be the optimal ranking for $G \setminus \{g_i\}$. This can be shown using contradiction i.e, if there was a better ranking for $G \setminus {g_i}$, that ranking with $g_i$ appended to the front would be better than $R_i$.

We now assume the theorem is true for size $2k-1$ instances and aim to prove for the same for size $2k+1$ instances. Consider $g_1$ as the first group in the ranking. This creates a certain number of upsets, for the purposes of ranking the remaining groups, $2$ of the remaining groups can be merged into a single group. This follows from the observation earlier that each group also 'separates' two groups. This can be considered an instance of the size $2k-1$ problem. Therefore the set of optimal rankings with $g_1$ as the first group in the rankings is made up of $g_1$ as the first group and a cyclic sweep of the remaining items to fill the remaining positions. Therefore the optimal permutation must be among the sets created by considering each of the $2k+1$ groups as the first group in the rankings. Let $R_{i,j}$ represent the ranking which has group $g_i$ as the first group and the remaining groups present as a cyclic sweep from $g_j$. Consider the case of $R_{1,k}$. Let $x_i$ represent the number of items in group $g_i$. If $R_{1,k}$ is a better ranking than $R_{k,k+1}$, it implies that
\begin{equation}
\label{ineq:1}
    \sum_{i=k}^{n+1} x_i > \sum_{i=n+2}^{2n+1} x_i
\end{equation}
by considering the shift of $g_1$ in the two rankings. The difference in the number of upsets between $R_{1,2}$ and $R_{1,k}$ is given by
\begin{equation*}
    x_2(-x_k - x_{k+1} \ldots -x_{n+2} + x_{n+3} \ldots x_{2n+1}) + x_3(-x_k - x_{k+1} \ldots -x_{n+3} + x_{n+4} \ldots x_{2n+1}) \ldots
\end{equation*}
Using Equation \ref{ineq:1}, it can be seen that each of the above terms are negative for any $j \leq n+1$, making $R_{1,2}$ the better ranking. Any $j > n+1$ cannot be considered as an optimal ranking since the first group as per the ranking must precede the second(otherwise switching them would decrease the upsets). Since either $R_{1,2}$ or $R_{k,k+1}$(both counterclockwise orderings) is better than $R_{1,k}$ whenever $k \leq n+1$($R_{1,k}$ cannot be the optimal ranking when $k>n+1$), and since this can be generalised for any $R_{i,j}$, it is shown that one of the counterclockwise orderings of the items is the optimal ranking. 
\end{proof}
\subsubsection{Finding Forbidden Configurations}
Given a tournament and a representation dimension, there is no known method to check whether the tournament can be represented by vectors in the given dimension. The forbidden configurations presented above were found by carefully creating an exhaustive set of cases, and showing each one causes a contradiction. The techniques used are presented below.

\textbf{Proof of Theorem \ref{thm:r4}}

Consider the representation of tournaments as given in Subsection \ref{subsec:Rep&T}. Since adding minor noise to each $h_i$ will not change tournament, we can deal only with cases in which any size $4$ subset of ${h_1,h_2,...h_12}$ consists of linearly independent vectors. By using this property, and without loss of generality,
\begin{equation}
    c_1 h_1 + c_2 h_2 + c_3 h_3 + c_4 h_4 = h_5
\end{equation}
We can construct cases based on the signs of the coefficients($c_i$) in the above equation. There are $2^4 = 16$ sign patterns/cases for any equation. These cases are filtered by multiplying the entire expression with expressions of the form $\bA^{\rot} h_i$.
\begin{equation}
    c_1 h_1\bA^{\rot} h_i + c_2 h_2\bA^{\rot} h_i + c_3 h_3\bA^{\rot} h_i + c_4 h_4\bA^{\rot} h_i = h_5\bA^{\rot} h_i
\end{equation}
In the above equation, the signs of all expressions of the form $h_i\bA^{\rot} h_j$ is known from the tournament configuration. Therefore simply comparing the sign of the LHS and RHS for all possible values of $i$ rules out many cases.\\

We now use multiple equations together in a bid to further filter the remaining cases. Consider the following equations without loss of generality.
\begin{equation}
    c_1 h_1 + c_2 h_2 + c_3 h_3 + c_4 h_4 = h_5
\end{equation}
\begin{equation}
    b_1 h_2 + b_2 h_3 + b_3 h_4 + b_4 h_5 = h_6
\end{equation}
\begin{equation}
    a_1 h_1 + a_2 h_2 + a_3 h_3 + a_4 h_4 = h_6
\end{equation}
$h_5$ can be eliminated from the first two equations, leaving two expressions of $h_6$ in terms of $h_1,\dots h_4$. The two sets of coefficients of $h_1,\dots h_4$ can be equated, and sign based arguments can eliminate a few more cases. Also, a set of 3 tuples can be constructed, with each item representing possible sign patterns for the 3 equations. Note that this set of 3 tuples does not contain entries which cannot apply simultaneously on the 3 equations.\\

The above elimination step can be considered as a filtration procedure given a size 6 tuple $(h_1, h_2, h_3, h_4, h_5, h_6)$ by using $3$ equations. We can perform a similar filtration step given any size $6$ tuple as well. Let the equations corresponding to $(h_1, h_2, h_3, h_4, h_5, h_6)$ be $E_{1,1},E_{1,2},E_{1,3}$. Similarly, consider the tuples $(h_1, h_2, h_3, h_4, h_5, h_7)$, $(h_2, h_3, h_4, h_5, h_6, h_7)$, $(h_1, h_2, h_3, h_4, h_6, h_7)$ and their corresponding equations represented by $E_{i,j}$ where $i$ is the index of the tuple it corresponds to and $j$ is the index of the equation. The above filtration process can be performed on all the tuples, following which an additional filtering step can be performed by using
\begin{itemize}
    \item $E_{1,1} \equiv E_{2,1}$
    \item $E_{1,2} \equiv E_{3,1}$
    \item $E_{1,3} \equiv E_{4,1}$
    \item $E_{2,2} \equiv E_{3,3}$
    \item $E_{2,3} \equiv E_{4,2}$
\end{itemize}
where the equivalence relation represents that the 2 equations are identical. Since identical equations must have identical sign patterns, the sign patterns not present in both sets of possible sign patterns can be filtered out. For the 11-DRT cone, this leaves us with null set, proving that it is a forbidden configuration.

\subsubsection{Additional Filtration for verifying 10 Node Forbidden Configuration}

$$\bT = \begin{pmatrix}
 \phantom{-}0&  \phantom{-}1&  \phantom{-}1&  \phantom{-}1& -1& -1& -1& -1& -1&  \phantom{-}1 \\
-1&  \phantom{-}0&  \phantom{-}1& -1&  \phantom{-}1&  \phantom{-}1& -1& -1&  \phantom{-}1& -1 \\
-1& -1&  \phantom{-}0&  \phantom{-}1&  \phantom{-}1& -1&  \phantom{-}1& -1& -1& -1 \\
-1&  \phantom{-}1& -1&  \phantom{-}0& -1&  \phantom{-}1&  \phantom{-}1& -1& -1& -1 \\
 \phantom{-}1& -1& -1&  \phantom{-}1&  \phantom{-}0&  \phantom{-}1& -1& -1&  \phantom{-}1& -1 \\
 \phantom{-}1& -1&  \phantom{-}1& -1& -1&  \phantom{-}0&  \phantom{-}1& -1&  \phantom{-}1& -1 \\
 \phantom{-}1&  \phantom{-}1& -1& -1&  \phantom{-}1& -1&  \phantom{-}0& -1&  \phantom{-}1& -1 \\
 \phantom{-}1&  \phantom{-}1&  \phantom{-}1&  \phantom{-}1&  \phantom{-}1&  \phantom{-}1&  \phantom{-}1&  \phantom{-}0&  \phantom{-}1& -1 \\
 \phantom{-}1& -1&  \phantom{-}1&  \phantom{-}1& -1& -1& -1& -1&  \phantom{-}0& -1 \\
-1&  \phantom{-}1&  \phantom{-}1&  \phantom{-}1&  \phantom{-}1&  \phantom{-}1&  \phantom{-}1&  \phantom{-}1&  \phantom{-}1&  \phantom{-}0 \\
\end{pmatrix}$$

Due to the lesser number of constraints present when dealing with 10 nodes instead of 12, an additional filtration step is required. The entire procedure described above can be considered as an elimination procedure given a tuple $T_1 = (h_1, \ldots h_7)$. Also consider that in the final filtration above, a set of size 4 tuples is created. These size 4 tuples will represent the possible sign patterns for $E_{1,1},E_{1,2},E_{1,3},E_{2,2}$. If a similar procedure is carried out on $T_2 = (h_2, \ldots h_8)$, another set of 4 tuples will be generated. The filtration in this step is based on the fact that the equations $E_{1,3},E{2,2}$ corresponding to $T_1$ are the same as $E_{1,1},E_{1,2}$ for $T_2$. The sign patterns which are not present in both sets of tuples are filtered out. This type of relationship can be obtained between $T_2$ and $T_3 = (h_3, \ldots h_8,h_0)$ as well, and in general between any $T_i$ and $T_j$ such that $T_j$ can be obtained from $T_i$ by removing the first entry and adding an entry to the end. Note that the tuples must have unique entries. Therefore a 'circular elimination' procedure can be followed, where the $T_i$ tuples considered are size 7 subsets of the circular permutations of ${h_1, \ldots h_8}$. By following this procedure, all the sign patterns remaining can be eliminated. Using size 7 circular permutations of ${h_1, \ldots h_10}$ in the last step does not eliminate all the cases, we believe this is due to nodes 1 to 8 forming a 7-DRT cone, which leads to many eliminations. The anonymized code for the filtration can be found \emph{\href{https://anonymous.4open.science/r/forbidden_filtration_search-82F1/forbidden_config_search.cpp}{here}}.

\end{document}